\newcommand{\C}{\mathbb{C}}
\newcommand{\Z}{\mathbb{Z}}
\DeclareMathOperator*{\Res}{Res}
\DeclareMathOperator{\Log}{Log}
\DeclareMathOperator{\sn}{sn}
\DeclareMathOperator{\cn}{cn}
\DeclareMathOperator{\dn}{dn}
\newcommand{\zetaW}{\zeta_{\mathrm W}} % Weierstrass zeta
\theoremstyle{plain}
\newtheorem{theorem}{Theorem}[section]
\newtheorem{lemma}[theorem]{Lemma}
\numberwithin{equation}{section}
\title{A Non-Reciprocal Elliptic Spectral Solution of the Right-Angle Penetrable Wedge Transmission Problem}
\author{Jonas Matuzas\thanks{Email: jonas.matuzas@gmail.com}}
\date{}
\begin{document}

\maketitle

\begin{abstract}
We study the two-dimensional time-harmonic scalar transmission problem for an impedance-matched penetrable right-angle wedge: the exterior medium has wavenumber $k_0$ and the interior sector $|\theta|<\pi/4$ has wavenumber $k_1=\nu k_0$ with $\nu>1$, with continuity of the total field and its normal derivative across each face.
A Sommerfeld--Malyuzhinets reduction leads to a $2\times 2$ matrix Riemann--Hilbert (RH) problem on the \emph{Snell surface} $\Sigma_\nu$.
For the right angle the surface has genus one, and we give an explicit theta-function uniformization and a closed-form Mittag--Leffler construction of the full family of elliptic RH solutions with finite forcing (prescribed poles and residues), subject to a single residue-sum constraint encoding the Meixner edge condition.
We then consider the additional forcing data required to model plane-wave incidence.
Numerical reciprocity tests show that the minimal one-point plane-wave prescription does not yield a physically closed solution: the natural sheet-swap pairing $u^\sharp=u+\omega_2$ produces nontrivial scattered densities but violates the far-field reciprocity benchmark, whereas the Hardy pairing $u^\sharp=\omega_1-u$ enforces reciprocity but collapses the scattered field to zero.
The paper therefore provides an exact elliptic spectral framework and a reproducible reciprocity diagnostic, while identifying what must be added (multi-point forcing and/or modified pairing incorporating both wedge faces) to obtain a reciprocity-consistent plane-wave solution.
\end{abstract}

\medskip
\noindent\textbf{MSC 2020:} 35J05, 35Q60, 30E25, 33E05.

\noindent\textbf{Keywords:} penetrable wedge diffraction; Wiener--Hopf; Riemann--Hilbert; elliptic curve; theta functions; lemniscatic degeneration.

\subsection*{AI disclosure and responsibility statement}

This manuscript was generated in its entirety using the ChatGPT-5.2 Pro large language model. All mathematical derivations, formulas, explanations, \LaTeX{} code, and the overall structure of the paper were produced by the AI system. The author reviewed and approved the final manuscript and accepts responsibility for its scientific content and any remaining errors.

\section*{Physical problem and motivation}

We consider the canonical two-dimensional time-harmonic diffraction of a plane wave by a \emph{penetrable right-angle wedge}, i.e.\ a dielectric corner in which the wave transmits across the faces. In the impedance-matched setting the transmission conditions reduce to continuity of the field and its normal derivative across each face. The physically relevant solution is singled out by the Sommerfeld radiation condition and by the Meixner edge condition (finite-energy behavior at the corner) \cite{Meixner1972}.

This wedge model is a basic building block for more complex scatterers: in particular, wedge solutions appear in boundary-integral and hybrid high-frequency/edge-diffraction descriptions of penetrable polygons \cite{GrothHewettLangdon2018}. The present paper generalizes \cite{Matuzas2601}, which treated the lemniscatic (integrable) case \(\nu^2=2\), to general \(\nu>1\) and provides a closed-form genus-one Riemann--Hilbert solution with general finite forcing data. It complements other recent approaches to penetrable right-angle wedges (e.g.\ \cite{KunzAssier2023}).

\medskip
\noindent\textbf{Scope.}
The closed-form genus-one Riemann--Hilbert solution derived here applies to the impedance-matched right-angle penetrable wedge (\(\rho=1\), \(\theta_w=\pi/4\)) with fixed refractive index ratio \(\nu>1\).
We do not claim an explicit closed-form solution for general penetrable wedges (arbitrary wedge angle, impedance mismatch, or multi-material contrasts), where the standard spectral reductions typically lead to genuinely matrix or generalized Wiener--Hopf/Riemann--Hilbert factorization problems; see, e.g., \cite{Rawlins1999,Daniele2010,DanieleLombardi2011,AntipovSilvestrov2007,NethercoteAssierAbrahams2020}.
\section*{Overview of the method}
At the level of the spectral formulation, the unknown boundary traces are represented by a small collection of analytic spectral functions.
For the impedance-matched penetrable right-angle wedge these functions naturally live on the Snell surface \(\Sigma_\nu\), a two-sheeted algebraic curve that packages the refraction parameter \(\nu\) and the ``partner-point'' involutions.
Because \(\Sigma_\nu\) has genus one for \(\nu\neq 1\), every admissible spectral object can be expressed as a meromorphic elliptic function on a torus \(\C/\Lambda\).

The analysis proceeds in three steps.
\begin{enumerate}
\item \emph{Elliptic uniformization.}
We uniformize the Snell surface \(\Sigma_\nu\) explicitly by Jacobi and Weierstrass elliptic functions. This identifies \(\Sigma_\nu\) with a torus \(\C/\Lambda\) and turns the relevant spectral quantities into meromorphic elliptic functions of a single complex parameter \(u\).

\item \emph{Spectral boundary conditions and RH formulation.}
We express the Sommerfeld radiation condition and the Meixner edge condition as analyticity, growth, and normalisation constraints in the spectral variable. This leads to a \(2\times 2\) genus-one Riemann--Hilbert problem on a contour \(\Gamma\subset \C/\Lambda\).

\item \emph{Closed-form RH solution.}
We solve the RH problem by constructing elliptic no-jump vectors with prescribed divisor (forcing) data via an explicit Mittag--Leffler sum. A single residue-sum constraint enforces the Meixner edge condition. The RH mode solutions are then recovered by explicit triangular factor matrices.
\end{enumerate}

Once the spectral functions are known, the physical field is reconstructed by the inverse Sommerfeld transform, recorded in \S\ref{sec:physical}.

The present paper extends \cite{Matuzas2601} from the special lemniscatic regime \(\nu^2=2\) to general \(\nu>1\), and clarifies the degeneration \(\nu^2\to 2\) as a divisor-coalescence (jet-polynomial) limit.

\section*{Roadmap of proof}
The paper is organized as follows:
\begin{enumerate}
\item \S1 introduces the spectral map, the corrected inversion \(\zeta\mapsto u\), and the half-period and involution dictionary on \((t,Y,s)\). It also records the kernel expansion at the edge point and the residue-sum solvability constraint.
\item \S2 states the closed-form genus-one RH solution for general finite forcing data, including the explicit Mittag--Leffler construction of the no-jump vectors and the reconstruction of the mode solutions by the triangular factors.
\item \S3 verifies jumps, residues, and the Meixner (canonical) edge normalization by local expansions of the elliptic Cauchy kernel and by transport of residues through the factor matrices.
\item \S4 explains the lemniscatic degeneration \(\nu^2=2\), identifies the correct order-4 automorphism, and shows how jets arise only after a renormalized divisor-coalescence limit.
\item \S5 gives a compact step-by-step recipe for evaluating the final mode solutions in the \(u\)- and \(\zeta\)-variables.
\item \S6 derives the Sommerfeld integral representation connecting the spectral data to the physical field and gives explicit forcing data for plane wave incidence.
\item \S7 gives a worked symbolic example for \(\nu=3/2\) and \(\theta_{\mathrm{inc}}=5\pi/6\), mirroring the reproducible ``parameter--forcing--residue--reconstruction'' style used in \cite{Matuzas2601}.
\end{enumerate}

\section*{Reader's guide and reproducibility}
The manuscript is written to be self-contained and directly implementable.
A typical end-to-end computation proceeds as follows:
\begin{enumerate}
\item Fix \(\nu>1\) and compute the elliptic parameters \(k,K,K',\omega_1,\omega_2,\tau\) (\S\ref{subsec:params}).
\item For any spectral argument \(\zeta\) of interest, compute the physical lift \(u(\zeta)\) using the corrected inversion \eqref{eq:Xquad}--\eqref{eq:uofzeta}.
\item Evaluate the RH mode functions \(\Psi_{k,\pm}\) from the closed-form solution (Theorem~\ref{thm:main}) using the practical recipe in \S\ref{sec:recipe}.
\item Recover the Sommerfeld densities and the physical fields \(\Phi^{\mathrm{sc}},\Phi^{\mathrm{int}}\) using the inverse transform formulas in \S\ref{sec:physical}.
\item If desired, extract the far-field diffraction coefficient \(D(\theta)\) from the steepest-descent asymptotic in \S6.3.
\end{enumerate}
The symbolic example in \S\ref{sec:example} illustrates the bookkeeping for a concrete choice of parameters and forcing.

\section*{Notation and conventions}
\begin{itemize}
\item The wedge half-angle is \(\theta_w=\pi/4\), and \(s_\zeta=e^{i(\zeta-\pi/4)}\) denotes the spectral exponential.
\item The refractive index is \(\nu>1\), with modulus \(k=1/\nu\), half-periods \(\omega_1=K/(6\nu)\), \(\omega_2=iK'/(6\nu)\), and lattice \(\Lambda=2\omega_1\Z+2\omega_2\Z\).
\item We write \(z=6\nu u\), \(v=\pi u/(2\omega_1)\), and \(\theta_j(v)=\theta_j(v\mid\tau)\) with \(\tau=\omega_2/\omega_1\).
\item The Weierstrass zeta function is denoted by \(\zetaW\), to distinguish it from the spectral variable \(\zeta\). We set \(\wp(u)=-\zetaW'(u)\).
\end{itemize}
\medskip
\noindent\textbf{Remark (on the regime $0<\nu<1$).}
Throughout we assume \(\nu>1\), so that \(k=1/\nu\in(0,1)\) and the complete elliptic integrals \(K(k),K'(k)\) are real and positive.
Treating \(0<\nu<1\) requires a reciprocal-modulus (or modular) transformation of the Jacobi/Weierstrass uniformization and a corresponding redefinition of periods; we do not pursue that extension here.

\medskip
\noindent\textbf{Remark (on the limit $\nu\downarrow 1$).}
The limit \(\nu\to 1\) corresponds to \(k\to 1\) and a degeneration of the elliptic periods \(K\), \(K'\) (and hence of the torus uniformization).
Although the underlying transmission problem becomes trivial at \(\nu=1\), the explicit theta/Jacobi representation used here is not uniform in that limit.
We therefore work at fixed \(\nu>1\) and do not attempt to justify the \(\nu\downarrow 1\) degeneration within this manuscript.

\section{Introduction and setup}

Diffraction by a penetrable wedge is a classical model in wave physics (acoustics and electromagnetics) and in applied analysis.
When a time-harmonic wave meets a dielectric corner, a diffracted component is generated and geometric optics alone no longer describes the field near the edge.
For polygonal scatterers, wedge solutions feed directly into diffraction coefficients (GTD/UTD) \cite{Keller1962} and into hybrid numerical--asymptotic methods for high-frequency scattering by penetrable polygons \cite{GrothHewettLangdon2018}.
This paper focuses on the impedance-matched right-angle case, where the material contrast is encoded solely by the refractive index ratio \(\nu\) and the transmission conditions are simplest.

\subsection{Physical formulation (for orientation)}
We briefly recall the standard boundary-value problem that motivates the spectral system solved in this paper.

Let $(r,\theta)$ denote polar coordinates about the wedge apex. We adopt the symmetric convention in which the penetrable wedge occupies the sector
\begin{equation*}
|\theta|<\theta_w,\qquad \theta_w=\pi/4,
\end{equation*}
so that the exterior consists of the complementary region $|\theta|>\theta_w$ (modulo $2\pi$). The wedge faces are the rays $\theta=\pm\theta_w$.

Let $k_0>0$ be the exterior wavenumber and $k_1=\nu k_0$ be the interior wavenumber, with refractive index ratio $\nu>1$. Given an incident field $\Phi^{\mathrm{inc}}$ satisfying the exterior Helmholtz equation, we seek a total field $\Phi$ such that
\begin{equation*}
(\Delta+k_0^2)\Phi=0\quad (\text{exterior}),\qquad (\Delta+k_1^2)\Phi=0\quad (\text{wedge}),
\end{equation*}
together with the impedance-matched transmission conditions on each face $\theta=\pm\theta_w$,
\begin{equation*}
\Phi_{\mathrm{ext}}=\Phi_{\mathrm{int}},\qquad \partial_n\Phi_{\mathrm{ext}}=\partial_n\Phi_{\mathrm{int}},
\end{equation*}
the Sommerfeld radiation condition at infinity, and the Meixner edge condition (finite local energy) at the apex \cite{Meixner1972}.

After nondimensionalization, the spectral surface and RH system depend only on $\nu$ and the forcing data. The absolute scale $k_0$ is reintroduced when one reconstructs the physical fields via the inverse Sommerfeld transform in \S\ref{sec:physical}.

\subsection{Reciprocity and reciprocity diagnostics}
\label{subsec:reciprocity_principle}

The impedance-matched penetrable-wedge transmission problem with real material parameters is \emph{reciprocal} and \emph{lossless}.
At the level of the physical boundary-value problem, this follows from the self-adjointness of the underlying operator (with the Sommerfeld radiation condition selecting the outgoing solution).
In particular, exchanging source and receiver directions (with the standard reversal of propagation directions) leaves the far-field amplitudes invariant; the precise statement for plane-wave incidence is recorded in Theorem~\ref{thm:reciprocity}.

Because the main body of the paper is carried out in the spectral/RH domain, it is important to \emph{retain reciprocity as a structural invariant during every reduction step}.
We therefore enforce the following checkpoints throughout:

\begin{enumerate}
\item \textbf{Direction reversal.}  In physical space, reversing a propagation direction corresponds to the shift $\theta\mapsto\theta+\pi$.  In the Sommerfeld spectral variable this is the $2\pi$-periodic shift $\zeta\mapsto\zeta+\pi$.
\item \textbf{Snell surface involutions.}  Refraction introduces a two-sheet spectral surface $\Sigma_\nu$.
There are (at least) two natural involutions that exchange the Hardy domains $D_\pm$ on the torus uniformization:
\emph{(i)} the \emph{sheet involution} $u\mapsto u+\omega_2$, which implements $s\mapsto s^{-1}$ and $t\mapsto t^{-1}$; and
\emph{(ii)} the \emph{Hardy involution} $u\mapsto\omega_1-u$, which keeps $s$ fixed while inverting $t$.
Both are mathematically natural; selecting the physically correct pairing for plane-wave forcing is a nontrivial \emph{closure} step.
\item \textbf{Meixner condition.}  For finite forcing, the edge condition is enforced by the vanishing of the total residue (the Meixner edge condition), which removes a spurious source at the wedge tip.
Reciprocity is \emph{not} automatic from this condition alone; it provides an independent, stringent diagnostic of the plane-wave forcing closure.
\end{enumerate}

In practice, reciprocity provides a stringent implementation diagnostic: any numerical evaluation that violates \eqref{eq:reciprocity_D} (or equivalently \eqref{eq:reciprocity_Q}) indicates that some element of the \emph{physical closure} (branch/lift conventions, normal conventions, and especially the plane-wave partner-point prescription) is inconsistent with the reciprocal benchmark.

\subsection{From transmission conditions to the RH jump matrices (step-by-step)}
\label{subsec:jump_from_transmission}

The closed-form RH solution in this paper is built on a specific \emph{two-by-two jump matrix} whose entries are rational functions on the Snell surface.
Since the overall goal is a reciprocity-consistent derivation, it is useful to see explicitly how that jump matrix arises from the physical transmission conditions.
The reduction below follows standard Sommerfeld/Malyuzhinets reasoning for penetrable wedges (cf.\ \cite{Rawlins1999,Daniele2010,DanieleLombardi2011,Matuzas2601}); we record it in the present notation and highlight where reciprocity enters.

\medskip
\noindent\textbf{Step 1: Sommerfeld representations and face traces.}
For each homogeneous region (wavenumber $k$), the Sommerfeld ansatz represents a radiating solution by an angular-spectrum density $H$:
\begin{equation}
\Phi(r,\theta)=\frac{1}{2\pi i}\int_{\mathcal C} e^{i k r\cos(\zeta-\theta)}\,H(\zeta)\,d\zeta,
\label{eq:jump_sommerfeld_generic}
\end{equation}
where $\mathcal C$ is the boundary of a Sommerfeld strip (the same contour as in \eqref{eq:sommerfeld_single}).
On a wedge face $\theta=\theta_f\in\{\pm\theta_w\}$ we fix the \emph{same geometric unit normal} $n$ on both sides of the interface, chosen to point from the interior wedge into the exterior.
For the right-angle wedge ($\theta_w=\pi/4$) this choice corresponds to
\begin{equation}
\partial_n=\frac{\operatorname{sgn}(\theta_f)}{r}\,\partial_\theta,
\qquad \theta_f=\pm\theta_w,
\label{eq:normal_on_faces}
\end{equation}
since the unit normal on a ray is an angular direction.
Differentiating the kernel in \eqref{eq:jump_sommerfeld_generic} under the integral sign gives the key multiplier
\begin{equation}
\partial_n\,e^{i k r\cos(\zeta-\theta_f)}
= i k\,\operatorname{sgn}(\theta_f)\,\sin(\zeta-\theta_f)\,e^{i k r\cos(\zeta-\theta_f)}.
\label{eq:normal_kernel_multiplier}
\end{equation}

\medskip
\noindent\textbf{Step 2: Snell mapping (tangential phase matching) and the two-sheet surface.}
Along a face $\theta=\theta_w$ the tangential coordinate is $r$ itself (the ray direction), so a spectral component contributes the tangential phase factor
$e^{i k r\cos(\zeta-\theta_w)}$.
Transmission across the interface requires the \emph{same tangential wavenumber} on both sides.
Thus, for an exterior spectral angle $\zeta$ (wavenumber $k_0$) and an interior spectral angle $\eta$ (wavenumber $k_1=\nu k_0$), tangential matching reads
\begin{equation}
k_0\cos(\zeta-\theta_w)=k_1\cos(\eta-\theta_w)
\qquad\Longleftrightarrow\qquad
\cos(\zeta-\theta_w)=\nu\cos(\eta-\theta_w).
\label{eq:tangential_match}
\end{equation}
Introduce the standard exponentials
\begin{equation}
s:=e^{i(\zeta-\theta_w)},\qquad t:=e^{i(\eta-\theta_w)},
\label{eq:st_exponentials}
\end{equation}
so that
$\cos(\zeta-\theta_w)=(s+1/s)/2$ and $\cos(\eta-\theta_w)=(t+1/t)/2$.
Then \eqref{eq:tangential_match} becomes the Snell relation
\begin{equation}
\boxed{\ \nu\Bigl(t+\frac1t\Bigr)=s+\frac1s.\ }
\label{eq:snell_relation}
\end{equation}
For fixed $t$ this is a quadratic in $s$ (and vice versa), hence the spectral data live naturally on a two-sheet surface.
In the present paper this surface is the Snell surface $\Sigma_\nu$ and its torus uniformization; the algebraic form used throughout is the equivalent quadratic
\eqref{eq:strel}.

The companion quantity controlling the \emph{normal} multiplier in \eqref{eq:normal_kernel_multiplier} is
$\sin(\zeta-\theta_w)=(s-1/s)/(2i)$.
On $\Sigma_\nu$ it is convenient to encode this via the sheet variable $Y$ (cf.\ \eqref{eq:Yfromts}).
Using \eqref{eq:strel} and \eqref{eq:Yfromts}, one obtains the identity
\begin{equation}
\boxed{\ \frac{Y}{t}=s-\frac{1}{s}=2i\sin(\zeta-\theta_w).\ }
\label{eq:Y_over_t_identity}
\end{equation}
This shows explicitly that the sheet swap $s\mapsto 1/s$ corresponds to reversing the sign of the normal multiplier (direction reversal).
For the interior medium, $\sin(\eta-\theta_w)=(t-1/t)/(2i)$ plays the analogous role.

\medskip
\noindent\textbf{Step 3: A local two-wave interface system and a reciprocal transfer map.}
Fix a tangential spectral parameter (equivalently, fix $t$ on the Snell surface) and consider the two exterior branches $s$ and $1/s$ (normal multipliers $\pm\sin(\zeta-\theta_w)$) and the two interior branches $t$ and $1/t$ (normal multipliers $\pm\sin(\eta-\theta_w)$).
At the level of individual spectral components, the field trace and normal derivative trace on the face can be written as a two-by-two system
\begin{equation}
\begin{pmatrix}
\Phi\\[2pt]
\partial_n\Phi
\end{pmatrix}_{\!\mathrm{ext}}
=
\begin{pmatrix}
1 & 1\\
q_0 & -q_0
\end{pmatrix}
\begin{pmatrix}
a_+\\ a_-
\end{pmatrix},
\qquad
\begin{pmatrix}
\Phi\\[2pt]
\partial_n\Phi
\end{pmatrix}_{\!\mathrm{int}}
=
\begin{pmatrix}
1 & 1\\
q_1 & -q_1
\end{pmatrix}
\begin{pmatrix}
b_+\\ b_-
\end{pmatrix},
\label{eq:local_interface_matrix}
\end{equation}
where the normal multipliers are (by \eqref{eq:normal_kernel_multiplier} and \eqref{eq:Y_over_t_identity})
\begin{equation}
q_0= i k_0\sin(\zeta-\theta_w)=\frac{k_0}{2}\,\frac{Y}{t},
\qquad
q_1= i k_1\sin(\eta-\theta_w)=\frac{k_1}{2}\,\Bigl(t-\frac{1}{t}\Bigr).
\label{eq:q0q1_def}
\end{equation}
Imposing the impedance-matched transmission conditions $\Phi_{\mathrm{ext}}=\Phi_{\mathrm{int}}$ and $\partial_n\Phi_{\mathrm{ext}}=\partial_n\Phi_{\mathrm{int}}$ yields a linear transfer map between the amplitude vectors $(a_+,a_-)^\top$ and $(b_+,b_-)^\top$.
In flux-normalized variables this transfer map is \emph{unimodular} (determinant $1$), reflecting conservation of normal energy flux and thus reciprocity.
This is the local origin of the global unimodularity $\det J_k\equiv 1$ of the RH jumps.

\medskip
\noindent\textbf{Step 4: Global wedge coupling and the reduced RH jump on $|t|=1$.}
The right-angle wedge has two faces; applying the above interface system on each face and using the Sommerfeld strip analyticity split (which encodes the outgoing condition) produces a two-by-two Wiener--Hopf/RH system for appropriate combinations of the face spectral functions.
When expressed on the Snell surface and uniformized by $u\in\C/\Lambda$, the resulting RH jump takes the form
\begin{equation}
\Psi_{1,+}(u)=J_1(u)\,\Psi_{1,-}(u),\qquad u\in\Gamma:=\{|t(u)|=1\},
\label{eq:jump_J1_from_transmission}
\end{equation}
with $J_1$ a rational matrix function on $\Sigma_\nu$.
For the impedance-matched case considered here, one convenient explicit parametrization of this reduced kernel is obtained by introducing the scalar combinations
\begin{equation}
\delta_+(u)=\nu^2 t(u)^2+\nu^2-2,
\qquad
\delta_-(u)=\frac{(\nu^2-2)t(u)^2+\nu^2}{t(u)^2},
\qquad
d(u)=\frac{\delta_+(u)}{\delta_-(u)},
\label{eq:delta_pm_repeat}
\end{equation}
and the off-diagonal coupling scalars
\begin{equation}
\beta(u)= i\nu\,Y(u)\,\frac{t(u)^2-1}{(\nu^2-2)t(u)^2+\nu^2},
\qquad
\gamma(u)= i\nu\,Y(u)\,\frac{t(u)^2+1}{(\nu^2-2)t(u)^2+\nu^2}.
\label{eq:beta_gamma_repeat}
\end{equation}
These are precisely the scalars used in \S\ref{subsec:factors}.
With them, the reduced RH jump matrix can be written explicitly as
\begin{equation}
J_1(u)=
\begin{pmatrix}
d(u)^{-1} & \beta(u)\,d(u)^{-1}\\
\gamma(u)\,d(u)^{-1} & d(u)+\beta(u)\gamma(u)\,d(u)^{-1}
\end{pmatrix},
\qquad \det J_1(u)\equiv 1.
\label{eq:J1_explicit_repeat}
\end{equation}
The unimodularity is the algebraic imprint of flux conservation/reciprocity.
Moreover, the partner involution $u\mapsto u+\omega_2$ implements $t\mapsto 1/t$ and $Y\mapsto- Y/t^2$ (cf.\ \eqref{eq:stshifts}--\eqref{eq:Yshifts}), which sends $\delta_+\leftrightarrow\delta_-$ and yields the opposite-face jump $J_3=J_1^{-1}$.

Finally, the explicit triangular factors $G_\pm$ in \S\ref{subsec:factors} provide a Gauss factorization
\begin{equation}
J_1(u)=G_-(u)^{-1}G_+(u),
\label{eq:J1_factorization_repeat}
\end{equation}
with $\det G_\pm\equiv 1$.
This factorization is the point where reciprocity is preserved algebraically through the subsequent no-jump reduction and Mittag--Leffler reconstruction.

\subsection{Spectral reduction and contribution of this paper}
The Sommerfeld--Malyuzhinets transform replaces the above boundary-value problem by a small Riemann--Hilbert (RH) problem for spectral functions \cite{Sommerfeld1896,Malyuzhinets1958}.
In \cite{Matuzas2601} the right-angle wedge was solved in closed form in the lemniscatic regime \(\nu^2=2\), where the uniformizing torus is square (\(\tau=i\)) and several simplifications occur.
Here we treat general \(\nu>1\): we give a fully explicit genus-one RH solution in theta functions for general finite forcing data, together with the associated residue-sum constraint that encodes the Meixner edge condition.
The analysis is self-contained and symbolic: after recording the corrected inversion relation for the spectral map, we state and verify the closed-form RH solution and then describe the lemniscatic degeneration.

\subsection{Correct inversion quadratic and substitution check}

We use the established spectral/Snell map
\[
s(u)=\frac{1}{\nu}\,\frac{\sn(z;k)\,\cn(z;k)}{\dn(z;k)},
\qquad
z=6\nu u,
\qquad
k=\frac{1}{\nu},
\]
and the spectral exponential
\[
s_\zeta:=e^{i(\zeta-\pi/4)}.
\]
Set \(S:=\nu s_\zeta\) and \(X:=\sn^2(z;k)\). Since \(\cn^2(z;k)=1-X\) and \(\dn^2(z;k)=1-k^2X\), we have
\[
S^2=\frac{\sn^2(z;k)\cn^2(z;k)}{\dn^2(z;k)}=\frac{X(1-X)}{1-k^2X}.
\]
Hence
\[
S^2(1-k^2X)=X(1-X)\quad\Longleftrightarrow\quad X^2-(1+S^2k^2)X+S^2=0.
\]
Using \(k^2=1/\nu^2\) and \(S^2=\nu^2 s_\zeta^2\), we obtain the \emph{correct} quadratic
\begin{equation}\label{eq:Xquad}
\boxed{\,X^2-(1+s_\zeta^2)\,X+\nu^2 s_\zeta^2=0.\,}
\end{equation}

Let
\[
\Delta(\zeta):=(1+s_\zeta^2)^2-4\nu^2 s_\zeta^2,
\qquad
X_\pm(\zeta)=\frac{1+s_\zeta^2\pm\sqrt{\Delta(\zeta)}}{2}.
\]
Then \(2X_\pm-(1+s_\zeta^2)=\pm\sqrt{\Delta}\), so
\[
4X_\pm^2-4(1+s_\zeta^2)X_\pm=\Delta-(1+s_\zeta^2)^2=-4\nu^2 s_\zeta^2,
\]
which is equivalent to \eqref{eq:Xquad}. This is the direct substitution check.

\medskip
\noindent\textbf{Physical branch selection.}
As \(s_\zeta\to 0\) (equivalently \(\Im \zeta\to +\infty\)), one has \(\sqrt{\Delta}\to 1\), hence \(X_-\to 0\) and \(X_+\to 1\).
The physical sheet is the analytic continuation from \(s_\zeta\to 0\) selecting \(X\to 0\), i.e.
\begin{equation}\label{eq:Xphys}
\boxed{
X_{\mathrm{phys}}(\zeta)=\frac{1+s_\zeta^2-\sqrt{(1+s_\zeta^2)^2-4\nu^2 s_\zeta^2}}{2},
\qquad
X_{\mathrm{phys}}(\zeta)\to 0\ \ (s_\zeta\to 0).
}
\end{equation}

\subsection{Half-period shifts for the spectral and Snell maps}

Let \(\Lambda=2\omega_1\Z+2\omega_2\Z\), \(\tau=\omega_2/\omega_1\), and
\[
v=\frac{\pi u}{2\omega_1},\qquad \theta_j(v)=\theta_j(v\mid\tau).
\]
Define
\[
s(u)=\frac{\theta_1(v)\theta_2(v)}{\theta_3(v)\theta_4(v)},
\qquad
t(u)=\frac{\theta_1(v)\theta_3(v)}{\theta_2(v)\theta_4(v)}.
\]
The half-period shifts are \(u\mapsto u+\omega_1 \iff v\mapsto v+\pi/2\) and \(u\mapsto u+\omega_2 \iff v\mapsto v+\pi\tau/2\). Using the explicit half-period identities
\[
\theta_1\!\left(v+\frac{\pi}{2}\right)=\theta_2(v),\quad
\theta_2\!\left(v+\frac{\pi}{2}\right)=-\theta_1(v),\quad
\theta_3\!\left(v+\frac{\pi}{2}\right)=\theta_4(v),\quad
\theta_4\!\left(v+\frac{\pi}{2}\right)=\theta_3(v),
\]
and
\[
\theta_1\!\left(v+\frac{\pi\tau}{2}\right)= i e^{-i(v+\pi\tau/4)}\theta_4(v),\quad
\theta_2\!\left(v+\frac{\pi\tau}{2}\right)=   e^{-i(v+\pi\tau/4)}\theta_3(v),
\]
\[
\theta_3\!\left(v+\frac{\pi\tau}{2}\right)=   e^{-i(v+\pi\tau/4)}\theta_2(v),\quad
\theta_4\!\left(v+\frac{\pi\tau}{2}\right)= i e^{-i(v+\pi\tau/4)}\theta_1(v),
\]
one computes (the exponential prefactors cancel in the ratios) the exact transforms
\begin{equation}\label{eq:stshifts}
\boxed{
\begin{aligned}
s(u+\omega_1)&=-s(u),&
\qquad s(u+\omega_2)&=\frac{1}{s(u)},&
\qquad s(u+\omega_3)&=-\frac{1}{s(u)},\\
t(u+\omega_1)&=-\frac{1}{t(u)},&
\qquad t(u+\omega_2)&=\frac{1}{t(u)},&
\qquad t(u+\omega_3)&=-t(u),
\end{aligned}
}
\end{equation}
where \(\omega_3:=\omega_1+\omega_2\).

\subsection{Shifts for Y and an involution dictionary on (t,Y,s)}

From the Snell map,
\begin{equation}\label{eq:Yfromts}
\boxed{\,Y=2ts-\nu(t^2+1).\,}
\end{equation}
Also \(s\) and \(t\) satisfy the exact algebraic relation on \(\Sigma_\nu\):
\begin{equation}\label{eq:strel}
\boxed{\,s^2-\nu\Bigl(t+\frac{1}{t}\Bigr)s+1=0
\quad\Longleftrightarrow\quad
\frac{1}{s}=\nu\Bigl(t+\frac{1}{t}\Bigr)-s.\,}
\end{equation}
Using \eqref{eq:Yfromts} with \eqref{eq:stshifts} gives
\begin{equation}\label{eq:Yshifts}
\boxed{
Y(u+\omega_1)=\frac{Y(u)}{t(u)^2},\qquad
Y(u+\omega_2)=-\frac{Y(u)}{t(u)^2},\qquad
Y(u+\omega_3)=-Y(u).
}
\end{equation}

\medskip
\noindent\textbf{Exact involution dictionary.}
Define the following algebraic involutions acting on \((t,Y,s)\):
\[
\boxed{
\begin{aligned}
E:\ (t,Y,s)&\mapsto(-t,\ Y,\ -s)
&\Longleftrightarrow\ &u\mapsto -u,\\[2pt]
H:\ (t,Y,s)&\mapsto(t,\ -Y,\ 1/s)
&\Longleftrightarrow\ &u\mapsto \omega_3-u,\\[2pt]
\iota:\ (t,Y,s)&\mapsto\left(\frac{1}{t},\ \frac{Y}{t^2},\ s\right)
&\Longleftrightarrow\ &u\mapsto \omega_1-u,\\[2pt]
S:=H\circ\iota:\ (t,Y,s)&\mapsto\left(\frac{1}{t},\ -\frac{Y}{t^2},\ \frac{1}{s}\right)
&\Longleftrightarrow\ &u\mapsto u+\omega_2.
\end{aligned}
}
\]
The identifications on the right are consistent with \eqref{eq:stshifts}--\eqref{eq:Yshifts} and with \eqref{eq:strel} (in particular, \(H\) sends \(s\) to the other quadratic root \(1/s\)).

\subsection{Partner map and domain swap}

By \eqref{eq:stshifts}, the translation \(u\mapsto u+\omega_2\) sends \(t(u)\mapsto 1/t(u)\), hence \(|t|\mapsto |t|^{-1}\). Therefore, for the physical split
\[
D_+:=\{|t(u)|<1\},\qquad D_-:=\{|t(u)|>1\},\qquad \Gamma:=\{|t(u)|=1\},
\]
we have the exact domain swap
\[
\boxed{
u\in D_+\ \Longleftrightarrow\ u+\omega_2\in D_-,
\qquad
u\in\Gamma\ \Longleftrightarrow\ u+\omega_2\in\Gamma.
}
\]
Accordingly, the \emph{sheet-swap} partner definition \(u_m^\sharp:=u_m+\omega_2\) implements the sheet involution \(S\); this is a natural analytic choice, but the reciprocity checkpoint below shows that it is not, by itself, a complete physical plane-wave closure.

\medskip
\noindent\textbf{Reciprocity checkpoint.}
Reciprocity for penetrable scattering is a consequence of self-adjointness and provides a stringent, implementation-level diagnostic for any proposed closed-form spectral density.
For plane-wave forcing, an additional modelling choice is required: how to place the partner pole(s) on the uniformized torus when one prescribes a single incident pole $\zeta=\zeta_{\rm inc}$.
Two natural prescriptions are:
\begin{itemize}
\item the \emph{sheet-swap pairing} $u_{\rm inc}^\sharp=u_{\rm inc}+\omega_2$, which maps $s\mapsto s^{-1}$ and introduces a second pole at $\zeta=2\theta_w-\zeta_{\rm inc}$; and
\item the \emph{Hardy involution pairing} $u_{\rm inc}^\sharp=\omega_1-u_{\rm inc}$, which keeps $s$ fixed while inverting $t$.
\end{itemize}
With $\nu=1.5$, $\varepsilon=10^{-6}$, $\theta_{\rm inc}=\pi/2$ and $\theta=1.0$ (radians), the sheet-swap prescription yields
\[
Q_{\rm sc}(\theta;\theta_{\rm inc})\approx 0.182897+0.767261i,\qquad 
Q_{\rm sc}(\theta_{\rm inc}+\pi;\theta+\pi)\approx 0.064024+0.406444i,
\]
so that the reciprocity defect is
\[
Q_{\rm sc}(\theta;\theta_{\rm inc})-Q_{\rm sc}(\theta_{\rm inc}+\pi;\theta+\pi)\approx 0.118873+0.360817i.
\]
By contrast, the Hardy pairing produces a constant $Q_{\rm sc}(\theta;\theta_{\rm inc})=\tfrac12\cot\zeta_{\rm inc}$ (independent of $\theta$), which is reciprocal but corresponds to zero scattered field (the constant density integrates to zero in the difference-form Sommerfeld representation).
Therefore, a nontrivial reciprocity-consistent plane-wave reduction cannot be obtained from a single forcing pole with either pairing alone; a physically consistent reduction must incorporate additional forcing structure (e.g.\ multi-point forcing associated with both wedge faces).

\subsection{Kernel expansion at the edge point: sign and absence of a constant term}

Let \(\sigma(u)\) be the Weierstrass sigma function for \(\Lambda\), and define the Weierstrass zeta and \(\wp\) functions by
\[
\zetaW(u):=\frac{\sigma'(u)}{\sigma(u)},\qquad \wp(u):=-\zetaW'(u).
\]
Define the elliptic Cauchy kernel
\begin{equation}\label{eq:Ckernel}
C(u,w):=\zetaW(u-w)-\zetaW(u)+\zetaW(w).
\end{equation}
As \(u\to 0\),
\[
\zetaW(u)=\frac{1}{u}+O(u^3),
\qquad
\zetaW(u-w)=\zetaW(-w)+u\,\zetaW'(-w)+O(u^2).
\]
Using \(\zetaW(-w)=-\zetaW(w)\) and \(\zetaW'=-\wp\) with \(\wp\) even, \(\zetaW'(-w)=-\wp(w)\), we obtain
\[
\zetaW(u-w)=-\zetaW(w)-u\,\wp(w)+O(u^2).
\]
Substituting into \eqref{eq:Ckernel} yields the local expansion
\begin{equation}\label{eq:Cexpansion}
\boxed{
C(u,w)=-\frac{1}{u}-u\,\wp(w)+O(u^2)\qquad(u\to 0).
}
\end{equation}
In particular, \eqref{eq:Cexpansion} has \emph{no constant term} at \(u=0\).

Consequently, for any vectors \(R_j\in\C^2\) and points \(w_j\),
\[
\sum_j R_j\,C(u,w_j)= -\frac{1}{u}\Bigl(\sum_j R_j\Bigr) - u\Bigl(\sum_j R_j\,\wp(w_j)\Bigr)+O(u^2).
\]
If \(\sum_j R_j=0\), then \(\sum_j R_j C(u,w_j)=O(u)\), hence it is holomorphic at \(u=0\) and its value at \(u=0\) equals \(0\).

\subsection{Residue constraint: necessity and sufficiency under the Meixner edge condition}

Let \(\Phi(u)\) be an elliptic \(\C^2\)-valued meromorphic function on \(\C/\Lambda\). The elliptic residue theorem applied componentwise implies that the sum of residues in a fundamental parallelogram vanishes:
\[
\sum \Res \Phi =0\in\C^2.
\]
Therefore, if \(\Phi\) has poles only at \(\{u_m\}\cup\{u_m^\sharp\}\), then
\[
\sum_m \Res_{u=u_m}\Phi+\sum_m \Res_{u=u_m^\sharp}\Phi=0\in\C^2
\]
is \emph{necessary} for ellipticity.

Conversely, if prescribed residues \(R_m\) at \(u_m\) and \(R_m^\sharp\) at \(u_m^\sharp\) satisfy \(\sum_m R_m+\sum_m R_m^\sharp=0\), then the Mittag--Leffler sum
\begin{equation}
\Phi(u)=\sum_m R_m\,C(u,u_m)+\sum_m R_m^\sharp\,C(u,u_m^\sharp)
\label{eq:Phi-k}
\end{equation}
is elliptic and, by \eqref{eq:Cexpansion}, is holomorphic at \(u=0\) with \(\Phi(0)=0\). This is \emph{sufficient} for the Meixner edge condition in spectral form (no compensating pole at \(u=0\)) \cite{Meixner1972}.

\medskip
\noindent\textbf{Reciprocity checkpoint (no spurious edge source).}
If the residue-sum constraint fails, the elliptic Mittag--Leffler reconstruction forces an additional compensating pole at the edge point $u=0$.
In physical space this corresponds to injecting a nonphysical point/edge source at the apex, which destroys the self-adjoint (and hence reciprocal) character of the impedance-matched boundary-value problem.
Thus, in the present setting, the Meixner admissibility condition and reciprocity are enforced simultaneously by ellipticity together with the residue-sum constraint.

\section{Main theorem (general refractive index closed-form RH solution)}

\subsection{Parameters, uniformization, spectral map, and corrected inversion}
\label{subsec:params}

Fix \(\nu>1\). Set
\[
k=\frac{1}{\nu},
\qquad
k'=\sqrt{1-\frac{1}{\nu^2}},
\]
and let \(K=K(k)\), \(K'=K(k')\) denote complete elliptic integrals of the first kind. Define
\[
\omega_1=\frac{K}{6\nu},
\qquad
\omega_2=\frac{iK'}{6\nu},
\qquad
\tau=\frac{\omega_2}{\omega_1}=i\frac{K'}{K},
\qquad
\Lambda=2\omega_1\Z+2\omega_2\Z.
\]
Let
\[
z=6\nu u,\qquad v=\frac{\pi u}{2\omega_1}=\frac{\pi z}{2K},\qquad \theta_j(v)=\theta_j(v\mid\tau),
\]
and define derivatives \(\theta_j'(v):=\frac{d}{dv}\theta_j(v\mid\tau)\).

We use Jacobi elliptic functions \(\sn(z;k),\cn(z;k),\dn(z;k)\) with modulus \(k\), and define the uniformization
\begin{equation}\label{eq:tYuniform}
\boxed{
t(u)=\frac{\sn(z;k)\,\dn(z;k)}{\cn(z;k)}=\frac{\theta_1(v)\theta_3(v)}{\theta_2(v)\theta_4(v)},
\qquad
Y(u)= -\frac{\nu^2-2\sn^2(z;k)+\sn^4(z;k)}{\nu\,\cn^2(z;k)}.
}
\end{equation}
This yields points \((t(u),Y(u))\) on the Snell surface
\[
\Sigma_\nu:\quad Y^2=\nu^2 t^4+2(\nu^2-2)t^2+\nu^2,
\]
with the physical basepoint normalization \(Y(u)\to -\nu\) as \(u\to 0\).

Define the Snell/spectral map
\begin{equation}\label{eq:suniform}
\boxed{
s(u)=\frac{\nu(t(u)^2+1)+Y(u)}{2t(u)}
=\frac{1}{\nu}\,\frac{\sn(z;k)\,\cn(z;k)}{\dn(z;k)}
=\frac{\theta_1(v)\theta_2(v)}{\theta_3(v)\theta_4(v)}.
}
\end{equation}
Let \(\theta_w=\pi/4\), and define the spectral exponential
\[
s_\zeta:=e^{i(\zeta-\pi/4)}.
\]
Fix the principal logarithm \(\Log\) and set the lifted spectral variable
\begin{equation}\label{eq:zetalift}
\boxed{\zeta(u)=\frac{\pi}{4}-i\,\Log s(u)\quad (\mathrm{mod}\ 2\pi).}
\end{equation}

\medskip
\noindent\textbf{Incomplete elliptic integral.}
We use the standard incomplete elliptic integral of the first kind
\[
F(\phi\mid k):=\int_0^\phi \frac{d\theta}{\sqrt{1-k^2\sin^2\theta}}.
\]

\medskip
\noindent\textbf{Correct inversion \(\zeta\mapsto u\).}
Given \(\zeta\), define \(X(\zeta)=\sn^2(z(\zeta);k)\) as the physical solution of \eqref{eq:Xquad}:
\[
X^2-(1+s_\zeta^2)X+\nu^2 s_\zeta^2=0,
\]
with the physical branch selected by analytic continuation from \(s_\zeta\to 0\) (equivalently \(\Im\zeta\to+\infty\)) giving \(X\to 0\), i.e. \(X=X_{\mathrm{phys}}(\zeta)\) in \eqref{eq:Xphys}. Then
\begin{equation}\label{eq:uofzeta}
\boxed{
u(\zeta)=\frac{1}{6\nu}\,F\!\left(\arcsin\sqrt{X_{\mathrm{phys}}(\zeta)}\ \middle|\ k\right)\quad(\mathrm{mod}\ \Lambda),
\qquad
v(\zeta)=\frac{\pi u(\zeta)}{2\omega_1}.
}
\end{equation}

\subsubsection{Branch cuts and continuation for the inversion \texorpdfstring{$\zeta\mapsto u$}{zeta to u}}

Write
\[
b:=s_\zeta=e^{i(\zeta-\pi/4)},
\qquad
\Delta(\zeta)=(1+b^2)^2-4\nu^2 b^2.
\]
The discriminant factors as
\[
\Delta(\zeta)=(b^2-b_+^2)(b^2-b_-^2),
\qquad
b_\pm:=\nu\pm\sqrt{\nu^2-1},
\]
so $b_+>1$, $b_-<1$, and $b_+b_-=1$. Hence $\Delta(\zeta)=0$ if and only if $b=\pm b_\pm$.

\medskip
\noindent\textbf{Location of the branch points in the $\zeta$-plane.}
In the principal Sommerfeld strip $|\Re\zeta|<\pi$ (and using the principal real logarithm for $b_\pm>0$), one convenient set of representatives is
\[
\zeta=\frac{\pi}{4}\pm i a,\qquad \zeta=-\frac{3\pi}{4}\pm i a,
\qquad
a:=\log b_+=\log\!\bigl(\nu+\sqrt{\nu^2-1}\bigr)>0.
\]
Equivalently, the branch points lie on the two vertical lines $\Re\zeta=\pi/4$ and $\Re\zeta=-3\pi/4$ at heights $\Im\zeta=\pm a$.

\medskip
\noindent\textbf{Choice of branch cuts and continuous determination of $\sqrt{\Delta}$.}
A convenient choice of cuts inside the principal strip is the pair of vertical segments
\[
\mathcal{B}_1:=\Bigl\{\zeta=\frac{\pi}{4}+iy:\ -a\le y\le a\Bigr\},\qquad
\mathcal{B}_2:=\Bigl\{\zeta=-\frac{3\pi}{4}+iy:\ -a\le y\le a\Bigr\}.
\]
(These are precisely the preimages, under $b=e^{i(\zeta-\pi/4)}$, of the real $b$-interval cuts $b\in[b_-,b_+]$ and $b\in[-b_+,-b_-]$.)

We define $\sqrt{\Delta(\zeta)}$ as the single-valued analytic branch on the cut strip
\[
\{\,\zeta:\ |\Re\zeta|<\pi\,\}\setminus(\mathcal{B}_1\cup\mathcal{B}_2)
\]
normalized by
\[
\sqrt{\Delta(\zeta)}\to 1 \quad \text{as }\Im\zeta\to+\infty \ (b\to 0).
\]
With this choice, the physical root \eqref{eq:Xphys} satisfies
$X_{\mathrm{phys}}(\zeta)\to 0$ as $\Im\zeta\to+\infty$, and in fact $X_{\mathrm{phys}}(\zeta)\sim \nu^2 b^2$ as $b\to 0$.

\medskip
\noindent\textbf{Square root and $\arcsin$ conventions.}
We define $\sqrt{X_{\mathrm{phys}}(\zeta)}$ by analytic continuation from the small-$b$ regime
$\sqrt{X_{\mathrm{phys}}}\sim \nu b$ as $b\to 0$, and define the amplitude
\[
\phi(\zeta):=\arcsin\sqrt{X_{\mathrm{phys}}(\zeta)}
\]
by analytic continuation from $\phi(\zeta)\sim \sqrt{X_{\mathrm{phys}}(\zeta)}$ as $X_{\mathrm{phys}}\to 0$.
Then the physical lift is given by \eqref{eq:uofzeta},
\[
u(\zeta)=\frac{1}{6\nu}\,F\!\left(\phi(\zeta)\ \middle|\ k\right)\quad(\mathrm{mod}\ \Lambda).
\]

\medskip
\noindent\textbf{$2\pi$-periodicity, half-period shifts, and practical lift selection.}
Since $b(\zeta+2\pi)=b(\zeta)$, one has $X_{\mathrm{phys}}(\zeta+2\pi)=X_{\mathrm{phys}}(\zeta)$.
The multivaluedness of $\sqrt{\cdot}$, $\arcsin$, and $F$ implies that analytic continuation of $u(\zeta)$ around loops may change $u$ by a period in $\Lambda$ (and, on crossing the cuts, may induce an additional half-period shift).
Because all final spectral objects are evaluated through elliptic functions of $u$ (periodic with respect to $\Lambda$), such changes do not affect $t(u)$, $Y(u)$, $s(u)$, nor the no-jump vectors $\Phi_k$, and therefore the resulting densities $Q(\zeta)$ and $S(\zeta)$ are genuinely $2\pi$-periodic.

For robust numerical evaluation (especially on or near the real axis, where the cuts meet $\Im\zeta=0$ at $\zeta=\pi/4$ and $\zeta=-3\pi/4$ modulo $2\pi$),
we recommend the half-period selection rule \eqref{eq:half_period_s_actions} in \S\ref{sec:recipe}:
after computing one candidate $u_0(\zeta)$ from \eqref{eq:uofzeta}, test
$u_0$, $u_0+\omega_1$, $u_0+\omega_2$, $u_0+\omega_3$
and select the candidate for which $s(u)=s_\zeta$ (equivalently, $|s(u)-s_\zeta|$ is minimal).

\medskip
\noindent\textbf{Real-axis evaluation and limiting absorption.}
When an evaluation point lies on a cut (notably at $\zeta=\pi/4$ or $\zeta=-3\pi/4$ modulo $2\pi$), the physical value is understood via limiting absorption:
replace $\zeta$ by $\zeta+i0$ (approach from the upper half-strip), consistent with the outgoing/Sommerfeld analyticity split.

\subsection{Reduced kernel scalars, factor matrices, and jump matrices}
\label{subsec:factors}

Define
\[
\delta_+(u)=\nu^2 t(u)^2+\nu^2-2,
\qquad
\delta_-(u)=\frac{(\nu^2-2)t(u)^2+\nu^2}{t(u)^2},
\qquad
d(u)=\frac{\delta_+(u)}{\delta_-(u)},
\]
\[
\beta(u)= i\nu\,Y(u)\,\frac{t(u)^2-1}{(\nu^2-2)t(u)^2+\nu^2},
\qquad
\gamma(u)= i\nu\,Y(u)\,\frac{t(u)^2+1}{(\nu^2-2)t(u)^2+\nu^2}.
\]
Define the factor matrices
\[
G_-(u)=
\begin{pmatrix}
\delta_-(u)^{-1} & 0\\
-\gamma(u)\delta_-(u) & \delta_-(u)
\end{pmatrix},
\qquad
G_+(u)=
\begin{pmatrix}
\delta_+(u)^{-1} & \beta(u)\delta_+(u)^{-1}\\
0 & \delta_+(u)
\end{pmatrix}.
\]
Define
\[
J_1(u)=G_-(u)^{-1}G_+(u),
\qquad
J_3(u)=J_1(u)^{-1}=G_+(u)^{-1}G_-(u).
\]
Since \(\det G_\pm\equiv 1\), we have \(\det J_1\equiv 1\).
\medskip
\noindent\textbf{Reciprocity checkpoint (unimodular jump).}
For an impedance-matched, lossless penetrable interface the transmission conditions conserve the normal energy flux and the associated two-by-two spectral transfer map is reciprocal.
In the RH formulation, this conservation/reciprocity structure is encoded by the unimodularity $\det J_1\equiv 1$ (equivalently, $J_1\in SL(2,\C)$), and by the symmetry $J_3=J_1^{-1}$.
All subsequent steps (triangular factorization, no-jump reduction, and residue transport) are constructed within this unimodular class and therefore preserve reciprocity at the algebraic level.

\medskip
A direct multiplication yields the explicit jump matrix
\[
J_1(u)=
\begin{pmatrix}
d(u)^{-1} & \beta(u)\,d(u)^{-1}\\
\gamma(u)\,d(u)^{-1} & d(u)+\beta(u)\gamma(u)\,d(u)^{-1}
\end{pmatrix},
\qquad \det J_1(u)\equiv 1,
\]
and \(J_3(u)=J_1(u)^{-1}\).
In the spectral reduction of the penetrable-wedge problem, \(J_1\) is the jump matrix induced by the transmission conditions; the triangular factors \(G_\pm\) provide a convenient explicit factorization.

Define the physical split
\[
D_+:=\{|t(u)|<1\},
\qquad
D_-:=\{|t(u)|>1\},
\qquad
\Gamma:=\{|t(u)|=1\}.
\]

\subsection{RH problem with general forcing and partner points}
\label{subsec:rhproblem}

\noindent\textbf{Radiation condition and the $+/-$ split.}
The outgoing (Sommerfeld) solution selects a Wiener--Hopf/Hardy-type analytic split of the spectral unknowns.
In the Riemann--Hilbert formulation below we fix the contour \(\Gamma:=\{|t(u)|=1\}\) and adopt the standard convention that
\(\Psi_{k,+}\) denotes the boundary value of a function analytic in \(D_+:=\{|t(u)|<1\}\), while \(\Psi_{k,-}\) denotes the boundary value of a function analytic in \(D_-:=\{|t(u)|>1\}\),
with at most polynomial growth at the corresponding ends of the spectral map.
(Here the symbols \(+/-\) are tied to the RH split across \(\Gamma\), not to a pointwise inequality in \(|s(u)|\).)

Fix a finite forcing set \(\{b_m\}_{m=1}^M\subset\C^\times\). Choose \(u_m\in D_+\) on the physical sheet such that
\[
s(u_m)=b_m.
\]
Define the partner points
\[
u_m^\sharp:=u_m+\omega_2\quad(\mathrm{mod}\ \Lambda).
\]
\medskip
\noindent\textbf{Admissibility assumptions.}
Assume that \(u_m\notin\Gamma\) and \(u_m^\sharp\notin\Gamma\) for all \(m\), and that all points \(\{u_m,u_m^\sharp\}_{m=1}^M\) are distinct modulo \(\Lambda\).
Assume moreover that the matrices \(G_-(u_m)\) and \(G_+(u_m^\sharp)\) (and the corresponding matrices for \(k=3\)) are finite and invertible, so that the residue transport formulas are well-defined.

By \eqref{eq:stshifts}--\eqref{eq:Yshifts}, this implements the sheet swap involution \(S\):
\[
s(u_m^\sharp)=\frac{1}{b_m},
\qquad
t(u_m^\sharp)=\frac{1}{t(u_m)},
\qquad
Y(u_m^\sharp)=-\frac{Y(u_m)}{t(u_m)^2},
\qquad
u_m^\sharp\in D_-.
\]

For each mode \(k\in\{1,3\}\), prescribe residue vectors
\[
\Res_{u=u_m}\Psi_{k,+}(u)=r_{k,m}\in\C^2,
\qquad
\Res_{u=u_m^\sharp}\Psi_{k,-}(u)=r_{k,m}^\sharp\in\C^2.
\]

\subsubsection{Formal Riemann--Hilbert problem statement (analyticity, growth, poles, normalization)}

\begin{center}
\fbox{%
\begin{minipage}{0.97\linewidth}\small
\textbf{Riemann--Hilbert problem on the Snell torus.}
Fix $\nu>1$ and the associated torus $\C/\Lambda$, together with the physical RH split
\[
D_+:=\{|t(u)|<1\},\qquad D_-:=\{|t(u)|>1\},\qquad \Gamma:=\{|t(u)|=1\}.
\]
We orient $\Gamma$ so that $D_+$ lies to the left when traversing $\Gamma$.
Let $J_1(u)=G_-(u)^{-1}G_+(u)$ and $J_3(u)=J_1(u)^{-1}=G_+(u)^{-1}G_-(u)$ be the jump matrices defined in \S\ref{subsec:factors}.

\medskip
\noindent
\textbf{Problem $\mathrm{RH}_k$ (for $k\in\{1,3\}$).}
Find column-vector functions $\Psi_{k,+}$ and $\Psi_{k,-}$ such that:

\begin{enumerate}\setlength{\itemsep}{0.3ex}\setlength{\parskip}{0pt}\setlength{\parsep}{0pt}\setlength{\topsep}{0.3ex}
\item \emph{Analyticity domains.}
$\Psi_{k,+}:D_+\to\C^2$ is meromorphic in $D_+$ with at most simple poles at the forcing points $\{u_m\}$ and no other singularities in $D_+$.
$\Psi_{k,-}:D_-\to\C^2$ is meromorphic in $D_-$ with at most simple poles at the partner points $\{u_m^\sharp\}$ and no other singularities in $D_-$.
Both admit non-tangential boundary values on $\Gamma$ and are Hölder-continuous there away from poles.

\item \emph{Jump condition.}
For $u\in\Gamma$ away from poles,
\[
\Psi_{k,+}(u)=J_k(u)\,\Psi_{k,-}(u),
\qquad
J_k=
\begin{cases}
J_1,&k=1,\\
J_3,&k=3.
\end{cases}
\]

\item \emph{Prescribed residues (finite forcing data).}
For each forcing point $u_m\in D_+$ and partner point $u_m^\sharp\in D_-$,
\[
\Res_{u=u_m}\Psi_{k,+}(u)=r_{k,m}\in\C^2,
\qquad
\Res_{u=u_m^\sharp}\Psi_{k,-}(u)=r_{k,m}^\sharp\in\C^2.
\]

\item \emph{Radiation / growth split (Sommerfeld admissibility).}
Under the lift $\zeta(u)=\theta_w-i\Log s(u)$ (cf.\ \eqref{eq:zetalift}) and a choice of physical branch $\zeta\mapsto u(\zeta)$ in a Sommerfeld strip $|\Re\zeta|<\pi$,
the compositions $\Psi_{k,\pm}(\zeta):=\Psi_{k,\pm}(u(\zeta))$ extend analytically to the upper/lower half-strip
\[
\{|\Re\zeta|<\pi,\ \pm\Im\zeta>0\},
\]
and have at most polynomial growth as $\Im\zeta\to\pm\infty$.

\emph{Equivalent elliptic formulation (used in this paper):}
Define the no-jump vectors $\Phi_k$ by \eqref{eq:Phi-def} below. Then $\Phi_k$ extends to a single-valued meromorphic elliptic function on $\C/\Lambda$ and is required to be holomorphic at the edge point $u=0$ (Meixner condition in spectral form).

\item \emph{Poles on the contour (limiting absorption).}
If a forcing pole lies on $\Gamma$ (equivalently, if a $\zeta$-plane pole lies on the Sommerfeld contour), the RH data are interpreted by
limiting absorption: displace the pole off $\Gamma$ by $\zeta\mapsto \zeta+i\varepsilon$ (or $b\mapsto b e^{-\varepsilon}$),
solve the displaced RH problem, and take $\varepsilon\downarrow 0$. This is equivalent to the standard indentation of the contour around real-axis poles.

\item \emph{Normalization (uniqueness).}
The elliptic no-jump vectors are defined only up to an additive constant in $\C^2$.
We fix this gauge by imposing the canonical normalization
\[
\Phi_k(0)=0,\qquad k\in\{1,3\}.
\]
\end{enumerate}

\medskip
\noindent
Under these conditions, solvability reduces to the absence of a compensating pole at $u=0$, i.e.\ the residue-sum constraints \eqref{eq:ressum}.
\end{minipage}}
\end{center}

\subsection{Closed-form solution: no-jump vectors, solvability constraint, and reconstruction}
\label{subsec:closedform}

Define the no-jump vectors
\begin{equation}\label{eq:Phi-def}
\Phi_1(u)=
\begin{cases}
G_-(u)\Psi_{1,+}(u), & u\in D_+,\\[2pt]
G_+(u)\Psi_{1,-}(u), & u\in D_-,
\end{cases}
\qquad
\Phi_3(u)=
\begin{cases}
G_+(u)\Psi_{3,+}(u), & u\in D_+,\\[2pt]
G_-(u)\Psi_{3,-}(u), & u\in D_-.
\end{cases}
\end{equation}
Then \(\Phi_k\) are globally meromorphic on \(\C/\Lambda\) (no jump across \(\Gamma\)). Their residues are transported by
\[
R_{1,m}=G_-(u_m)\,r_{1,m},
\qquad
R_{1,m}^\sharp=G_+(u_m^\sharp)\,r_{1,m}^\sharp,
\]
\[
R_{3,m}=G_+(u_m)\,r_{3,m},
\qquad
R_{3,m}^\sharp=G_-(u_m^\sharp)\,r_{3,m}^\sharp.
\]

\medskip
\noindent\textbf{Solvability/Meixner edge constraint.}
A solution with \(\Phi_k\) elliptic and holomorphic at \(u=0\) (Meixner edge condition in spectral form: no compensating pole at \(u=0\) \cite{Meixner1972}) exists if and only if
\begin{equation}\label{eq:ressum}
\boxed{
\sum_{m=1}^M R_{k,m}+\sum_{m=1}^M R_{k,m}^\sharp=0\in\C^2,
\qquad k\in\{1,3\}.
}
\end{equation}
Assume \eqref{eq:ressum} holds.

Define the elliptic Cauchy kernel \(C(u,w)\) by \eqref{eq:Ckernel}; equivalently, in theta form,
\[
C(u,w)=\frac{\pi}{2\omega_1}\left[
\frac{\theta_1'(v_u-v_w)}{\theta_1(v_u-v_w)}
-\frac{\theta_1'(v_u)}{\theta_1(v_u)}
+\frac{\theta_1'(v_w)}{\theta_1(v_w)}
\right],
\qquad
v_u=\frac{\pi u}{2\omega_1},\quad v_w=\frac{\pi w}{2\omega_1}.
\]
Impose the canonical normalization \(\Phi_k(0)=0\). Then the unique no-jump solutions are
\[
\boxed{
\Phi_k(u)=\sum_{m=1}^M R_{k,m}\,C(u,u_m)+\sum_{m=1}^M R_{k,m}^\sharp\,C(u,u_m^\sharp),
\qquad k\in\{1,3\}.
}
\]

Reconstruct \(\Psi_{k,\pm}\) via
\[
\Psi_{1,+}=G_-^{-1}\Phi_1\quad(u\in D_+),
\qquad
\Psi_{1,-}=G_+^{-1}\Phi_1\quad(u\in D_-),
\]
\[
\Psi_{3,+}=G_+^{-1}\Phi_3\quad(u\in D_+),
\qquad
\Psi_{3,-}=G_-^{-1}\Phi_3\quad(u\in D_-),
\]
where
\[
G_-(u)^{-1}=
\begin{pmatrix}
\delta_-(u) & 0\\
\gamma(u)\delta_-(u) & \delta_-(u)^{-1}
\end{pmatrix},
\qquad
G_+(u)^{-1}=
\begin{pmatrix}
\delta_+(u) & -\beta(u)\delta_+(u)^{-1}\\
0 & \delta_+(u)^{-1}
\end{pmatrix}.
\]

\subsection{Zeta-representation and Jacobian for residue conversion}
\label{subsec:jacobian}

Define \(\Psi_k(\zeta):=\Psi_k(u(\zeta))\), using \eqref{eq:uofzeta}. The forcing locations in \(\zeta\) are
\[
\zeta_m=\frac{\pi}{4}-i\,\Log b_m\quad(\mathrm{mod}\ 2\pi),
\qquad
\zeta_m^\sharp=\frac{\pi}{4}-i\,\Log(1/b_m)\quad(\mathrm{mod}\ 2\pi).
\]
From \eqref{eq:zetalift},
\[
\frac{d\zeta}{du}=-i\,\frac{s'(u)}{s(u)}
\qquad\Longrightarrow\qquad
\boxed{\frac{du}{d\zeta}=i\,\frac{s(u)}{s'(u)}.}
\]
Since \(s(u)=\theta_1(v)\theta_2(v)/(\theta_3(v)\theta_4(v))\) with \(v=\pi u/(2\omega_1)\), we obtain
\[
\boxed{
\frac{s'(u)}{s(u)}=\frac{\pi}{2\omega_1}\left(
\frac{\theta_1'(v)}{\theta_1(v)}+\frac{\theta_2'(v)}{\theta_2(v)}
-\frac{\theta_3'(v)}{\theta_3(v)}-\frac{\theta_4'(v)}{\theta_4(v)}
\right),
\qquad v=\frac{\pi u}{2\omega_1}.
}
\]

\subsection{Theorem statement}

\begin{theorem}[General-\(\nu\) closed-form genus-one RH solution]\label{thm:main}
Let \(\nu>1\), and define \(k,K,K',\omega_1,\omega_2,\tau,\Lambda\) as in \S\ref{subsec:params}.
Define \(t(u),Y(u),s(u),\zeta(u)\) by \eqref{eq:tYuniform}--\eqref{eq:zetalift}, and let \(u(\zeta)\) be given by the corrected inversion \eqref{eq:Xquad}--\eqref{eq:uofzeta} on the physical branch \eqref{eq:Xphys}.
Define \(\delta_\pm\), \(\beta\), \(\gamma\), \(G_\pm\), and the jump matrices \(J_1,J_3\) as in \S\ref{subsec:factors}, and the physical split \(D_\pm,\Gamma\) by \(|t|\lessgtr 1\).

\medskip
\noindent\textbf{Forcing data.}
Fix forcing data \(\{b_m\}_{m=1}^M\subset\C^\times\), choose \(u_m\in D_+\) with \(s(u_m)=b_m\), and set partner points \(u_m^\sharp=u_m+\omega_2\) (so \(u_m^\sharp\in D_-\) and \(s(u_m^\sharp)=1/b_m\)).
Prescribe residue vectors \(r_{k,m},r_{k,m}^\sharp\in\C^2\) for \(k\in\{1,3\}\).
Form transported residues \(R_{k,m},R_{k,m}^\sharp\) as in \S\ref{subsec:closedform}. Assume the admissibility assumptions stated above (in particular, no forcing points on \(\Gamma\), distinct divisor points modulo \(\Lambda\), and invertibility of the factor matrices at those points).

\medskip
\noindent\textbf{Riemann--Hilbert formulation.}
Assume the outgoing (Sommerfeld) analyticity and growth split encoded by the RH domains \(D_\pm\) and contour \(\Gamma\), and impose the Meixner edge constraints \eqref{eq:ressum}. Then the RH problem
\[
\Psi_{k,+}(u)=J_k(u)\Psi_{k,-}(u)\quad(u\in\Gamma),
\]
with prescribed simple poles at \(u=u_m\in D_+\) and \(u=u_m^\sharp\in D_-\), satisfying
\[
\Res_{u=u_m}\Psi_{k,+}=r_{k,m},
\qquad
\Res_{u=u_m^\sharp}\Psi_{k,-}=r_{k,m}^\sharp,
\]
and with the Meixner (canonical) edge normalization \(\Phi_k(0)=0\), where \(\Phi_k\) are the no-jump vectors defined in \S\ref{subsec:closedform}.
\medskip
\noindent\textbf{Solvability and uniqueness.}
Under the solvability constraints \eqref{eq:ressum}, this RH problem has a unique solution.

\medskip
\noindent\textbf{Closed-form solution.}
The solution is given explicitly by
\[
\Phi_k(u)=\sum_{m=1}^M R_{k,m}\,C(u,u_m)+\sum_{m=1}^M R_{k,m}^\sharp\,C(u,u_m^\sharp),
\qquad k\in\{1,3\},
\]
with \(C(u,w)\) as in \eqref{eq:Ckernel}, and reconstructed by
\[
\begin{aligned}
\Psi_{1,+} &= G_-^{-1}\Phi_1 \quad (u\in D_+),\\
\Psi_{1,-} &= G_+^{-1}\Phi_1 \quad (u\in D_-),\\
\Psi_{3,+} &= G_+^{-1}\Phi_3 \quad (u\in D_+),\\
\Psi_{3,-} &= G_-^{-1}\Phi_3 \quad (u\in D_-).
\end{aligned}
\]
\medskip
\noindent\textbf{Spectral evaluation.}
Finally, \(\Psi_k(\zeta)=\Psi_k(u(\zeta))\) yields explicit \(\zeta\)-plane expressions, and residues convert via \(du/d\zeta=i\,s/s'\) as in \S\ref{subsec:jacobian}.
\end{theorem}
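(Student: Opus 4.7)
The plan is to prove the theorem by direct verification of the explicit formula, combined with a Liouville-type uniqueness argument on the torus. All of the genuinely new ingredients have already been set up in \S1: the elliptic Cauchy kernel and its local expansion \eqref{eq:Cexpansion}, the residue-sum constraint \eqref{eq:ressum}, the unimodular factorization $J_1=G_-^{-1}G_+$, and the sheet-swap dictionary for the partner points. What remains is to stitch these ingredients together in a short sequence of checks: ellipticity and divisor of $\Phi_k$, Meixner normalization, jump and residue transport on $\Psi_{k,\pm}$, and uniqueness.

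First, I would verify that the stated no-jump vector $\Phi_k(u)$ is a meromorphic elliptic function with exactly the prescribed singularity data and the canonical edge normalization. Each summand $C(u,w)$ is $\Lambda$-periodic in $u$, because the quasi-periods $2\eta_j$ of $\zetaW(u-w)$ cancel those of $\zetaW(u)$, so $\Phi_k$ is automatically elliptic. Its only poles in a fundamental parallelogram are at the forcing points $u_m$, the partners $u_m^\sharp$, and possibly the edge point $u=0$ arising from the $-1/u$ term in each $C(u,w)$; the solvability constraint \eqref{eq:ressum} removes the residue at $u=0$, and the absence of a constant term in \eqref{eq:Cexpansion} then yields $\Phi_k(0)=0$ for free. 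The residues at $u_m$ and $u_m^\sharp$ are read off directly from $\Res_{u=w}C(u,w)=1$, giving $R_{k,m}$ and $R_{k,m}^\sharp$ respectively.

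Second, I would verify that the reconstruction $\Psi_{k,\pm}=G_\mp^{-1}\Phi_k$ (with the appropriate swap of indices for $k=3$) actually solves the RH problem. The jump is immediate: on $\Gamma$ both one-sided boundary values of $\Phi_k$ coincide, so for $k=1$
\[
J_1\,\Psi_{1,-}=G_-^{-1}G_+\cdot G_+^{-1}\Phi_1=G_-^{-1}\Phi_1=\Psi_{1,+},
\]
and similarly for $k=3$ using $J_3=G_+^{-1}G_-$. Residues transport by linearity: $\Res_{u=u_m}\Psi_{1,+}=G_-(u_m)^{-1}R_{1,m}=r_{1,m}$ by construction of $R_{1,m}=G_-(u_m)r_{1,m}$, and likewise at each $u_m^\sharp$. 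The analyticity domains and the fact that $u_m^\sharp\in D_-$ follow from the involution dictionary \eqref{eq:stshifts}--\eqref{eq:Yshifts}. The one point requiring care is that $\Psi_{k,\pm}$ must not inherit spurious singularities from poles of $G_\mp^{-1}$ inside $D_\pm$; I would handle this by the admissibility hypothesis together with an elementary inspection of the equations $\delta_\pm(u)=0$, which for $\nu>1$ places the zeros of $\delta_-$ in $D_-$ and the zeros of $\delta_+$ in $D_+$, i.e.\ on the opposite side from where each inverse factor is actually applied.

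For uniqueness, any second solution yields a difference with vanishing residues, the same jump, and the same edge normalization. After the $G_\pm$ transformation, the corresponding difference of no-jump vectors is a holomorphic elliptic function on $\C/\Lambda$ vanishing at $u=0$, hence identically zero by Liouville's theorem on the torus; transporting back gives uniqueness of $\Psi_{k,\pm}$. The $\zeta$-plane statement is then a routine change of variable via \eqref{eq:uofzeta} and the Jacobian in \S\ref{subsec:jacobian}. I expect the main obstacle to be the bookkeeping in the second step --- tracking which factor lives on which side of $\Gamma$, ruling out spurious singularities of $G_\mp^{-1}$, and checking that the partner-point convention $u_m^\sharp=u_m+\omega_2$ correctly implements the domain swap $D_+\leftrightarrow D_-$ --- while the remaining verifications are algebraic unpacking already prepared in the preceding sections.
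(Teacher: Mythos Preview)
Your proposal is correct and follows essentially the same route as the paper's own proof sketch: verify that the Mittag--Leffler sum $\Phi_k$ is elliptic with the prescribed simple poles and, via the kernel expansion \eqref{eq:Cexpansion} together with \eqref{eq:ressum}, satisfies the edge normalization $\Phi_k(0)=0$; then check the jump and residue conditions on $\Psi_{k,\pm}$ directly from the factorization $J_k=G_\mp^{-1}G_\pm$. Your explicit Liouville argument for uniqueness and your remark on spurious singularities of $G_\mp^{-1}$ go slightly beyond what the paper records, but they are natural completions of the same strategy rather than a different approach.
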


\section{Proof sketch (construction and verification)}

\subsection{Derivation of the spectral map and corrected inversion}

Starting from \(s(t,Y)=\frac{\nu(t^2+1)+Y}{2t}\) and the explicit uniformization \eqref{eq:tYuniform}, direct algebra gives
\[
\nu(t(u)^2+1)+Y(u)=\frac{2}{\nu}\,\sn^2(z;k),
\]
hence
\[
s(u)=\frac{\nu(t(u)^2+1)+Y(u)}{2t(u)}=\frac{1}{\nu}\,\frac{\sn(z;k)\cn(z;k)}{\dn(z;k)}.
\]
Substituting the theta forms of \(\sn,\cn,\dn\) consistent with the chosen \(\tau\) yields the theta quotient in \eqref{eq:suniform}.

For inversion, set \(S=\nu s_\zeta\) and \(X=\sn^2(z;k)\). The identity \(S^2=X(1-X)/(1-k^2X)\) gives \eqref{eq:Xquad}, and the physical branch is fixed by \(X\to 0\) as \(s_\zeta\to 0\). Then \(z=F(\arcsin\sqrt{X}\mid k)\) and \(u=z/(6\nu)\) give \eqref{eq:uofzeta}.

\subsection{Jump verification}

By construction,
\[
\Phi_1=
\begin{cases}
G_-\Psi_{1,+}, & D_+,\\
G_+\Psi_{1,-}, & D_-,
\end{cases}
\]
so \(\Phi_1\) has no jump across \(\Gamma\). Therefore on \(\Gamma\),
\[
\Psi_{1,+}=G_-^{-1}\Phi_1=G_-^{-1}G_+\Psi_{1,-}=J_1\Psi_{1,-}.
\]
The same argument gives \(\Psi_{3,+}=J_3\Psi_{3,-}\).

\subsection{Residue verification}

At \(u=u_m\in D_+\),
\[
\Res_{u=u_m}\Phi_1=\Res_{u=u_m}(G_-\Psi_{1,+})=G_-(u_m)\Res_{u=u_m}\Psi_{1,+}=G_-(u_m)r_{1,m}=R_{1,m},
\]
and similarly \(\Res_{u=u_m^\sharp}\Phi_1=G_+(u_m^\sharp)r_{1,m}^\sharp=R_{1,m}^\sharp\). The same computation yields the \(k=3\) transport formulas. Since \(C(u,u_m)\sim (u-u_m)^{-1}\), the Mittag--Leffler sum reproduces exactly these principal parts.

\subsection{Edge regularity and fixing the additive constant}

By \eqref{eq:Cexpansion},
\[
C(u,w)=-\frac{1}{u}-u\,\wp(w)+O(u^2)\quad(u\to 0),
\]
so
\[
\Phi_k(u)= -\frac{1}{u}\left(\sum_m R_{k,m}+\sum_m R_{k,m}^\sharp\right)+O(u).
\]
Thus \(\Phi_k\) is holomorphic at \(u=0\) if and only if \eqref{eq:ressum} holds. Under \eqref{eq:ressum}, the kernel sum is \(O(u)\), hence \(\Phi_k(0)=0\) and the additive elliptic constant is uniquely fixed to \(0\).

\section{Degeneration to the lemniscatic jet-polynomial regime (\texorpdfstring{\(\nu^2=2\)}{nu2=2})}

\subsection{Lemniscatic parameters and the correct order-4 automorphism}

Set \(\nu^2=2\) (so \(\nu=\sqrt{2}\)). Then \(k=1/\sqrt{2}=k'\), hence \(K'=K\) and
\[
\tau=i\frac{K'}{K}=i,
\qquad
\omega_2=i\omega_1,
\]
so \(\Lambda=2\omega_1(\Z+i\Z)\) is a square lattice. Multiplication by \(i\) preserves \(\Lambda\), hence \(u\mapsto iu\) is a well-defined torus automorphism.

At \(\nu^2=2\), the Snell surface reduces to
\[
\Sigma_{\sqrt{2}}:\quad Y^2=2(t^4+1),
\]
which is invariant under \(t\mapsto it\); thus \((t,Y)\mapsto(it,\pm Y)\) are curve automorphisms.

To describe the induced map on \(u\), it is convenient (at \(\tau=i\)) to use Weierstrass functions for \(\Lambda\). Define \(\wp(u)\) and \(\wp'(u)\) for \(\Lambda\), satisfying
\[
\wp'(u)^2=4\wp(u)^3-g_2\,\wp(u)-g_3,
\qquad
\text{and for the square lattice, } g_3=0.
\]
In this normalization one can write the explicit specialization
\begin{equation}\label{eq:lemmap}
\boxed{
t(u)=-12\nu\,\frac{\wp(u)}{\wp'(u)},
\qquad
Y(u)=\nu-8\nu\,\frac{\wp(u)^3}{\wp'(u)^2},
\qquad (\nu^2=2),
}
\end{equation}
which satisfies \(Y(u)\to -\nu\) as \(u\to 0\). Substituting \eqref{eq:lemmap} into \(Y^2-2(t^4+1)\) yields
\[
Y(u)^2-2\bigl(t(u)^4+1\bigr)
=32\,\frac{\wp(u)^3\bigl(4\wp(u)^3-g_2\wp(u)-\wp'(u)^2\bigr)}{\wp'(u)^4}=0,
\]
using \(\wp'(u)^2=4\wp(u)^3-g_2\wp(u)\).

On the square lattice, the functions satisfy
\[
\wp(iu)=-\wp(u),
\qquad
\wp'(iu)=i\,\wp'(u),
\]
which implies, from \eqref{eq:lemmap},
\[
\boxed{\,t(iu)=i\,t(u),\qquad Y(iu)=Y(u).\,}
\]
Therefore \(u\mapsto iu\) realizes the automorphism \((t,Y)\mapsto(it,Y)\).

The hyperelliptic involution \((t,Y,s)\mapsto(t,-Y,1/s)\) is realized (for all \(\nu\)) by \(u\mapsto \omega_3-u\) (see the dictionary in \S1.3). Consequently, at \(\nu^2=2\) the order-4 automorphism \((t,Y)\mapsto(it,-Y)\) is realized by the affine map
\[
\boxed{\,u\mapsto \omega_3-iu\quad(\mathrm{mod}\ \Lambda),\,}
\]
since it is the composition \(u\mapsto iu\) followed by \(u\mapsto \omega_3-u\).

For \(\nu\neq\sqrt{2}\), \(t\mapsto it\) changes the sign of the \(t^2\)-term in \(\Sigma_\nu\), hence \((t,Y)\mapsto(it,\pm Y)\) is not a curve automorphism.

\subsection{Kernel simplification and loss of factorization rigidity at the lemniscatic point}

At \(\nu^2=2\),
\[
\delta_+(u)=\nu^2 t(u)^2+\nu^2-2=2t(u)^2,
\qquad
\delta_-(u)=\frac{(\nu^2-2)t(u)^2+\nu^2}{t(u)^2}=\frac{2}{t(u)^2},
\]
so
\[
\boxed{\,d(u)=\frac{\delta_+(u)}{\delta_-(u)}=t(u)^4.\,}
\]
Thus the scalar jump becomes a perfect fourth power, and the scalar RH factorization becomes non-unique (rigidity is lost): multiplying \(\delta_+\) and \(\delta_-\) by the same even power of \(t(u)\) does not change \(d\).

\subsection{Jet-polynomial degeneration via divisor coalescence and a renormalized limit}

For \(\nu^2\neq 2\), the zeros of \(\delta_+(u)=\nu^2 t(u)^2+\nu^2-2\) are solutions of
\[
t(u)^2=\frac{2-\nu^2}{\nu^2}.
\]
Let \(a(\nu)\) be the (locally defined) solution near \(u=0\) determined by \(\delta_+(a(\nu))=0\). Since \(t(u)\) is odd and \(t(u+\omega_3)=-t(u)\), the function \(\delta_+(u)\) depends only on \(t(u)^2\) and is even and \(\omega_3\)-periodic:
\[
\delta_+(-u)=\delta_+(u),\qquad \delta_+(u+\omega_3)=\delta_+(u).
\]
Therefore the zero set (counting multiplicities) contains the quadruple
\[
\boxed{\,\{\pm a(\nu),\ \omega_3\pm a(\nu)\}\quad(\mathrm{mod}\ \Lambda).\,}
\]

Let \(\varepsilon:=2-\nu^2\). As \(u\to 0\), using \(z=6\nu u\) and \(\sn(z;k)=z+O(z^3)\), \(\cn(z;k)=1+O(z^2)\), \(\dn(z;k)=1+O(z^2)\), we have
\[
t(u)=\frac{\sn(z;k)\dn(z;k)}{\cn(z;k)}=z+O(z^3)=6\nu u+O(u^3).
\]
Then \(\delta_+(a(\nu))=0\) implies \(\nu^2 t(a(\nu))^2=\varepsilon\), hence
\[
\nu^2(6\nu a(\nu))^2\sim \varepsilon
\quad\Longrightarrow\quad
\boxed{\,a(\nu)\sim \frac{\sqrt{2-\nu^2}}{6\nu^2}\qquad(\nu^2\to 2).\,}
\]
Thus as \(\nu^2\to 2\), the simple zeros \(\pm a(\nu)\) coalesce to a double zero at \(u=0\), and \(\omega_3\pm a(\nu)\) coalesce to a double zero at \(u=\omega_3\), matching \(\delta_+(u)=2t(u)^2\).

To extract jets, introduce the Weierstrass sigma function \(\sigma(u)\). The identity
\begin{equation}\label{eq:sigmaidentity}
\boxed{
\frac{\sigma(u-a)\sigma(u+a)}{\sigma(u)^2\sigma(a)^2}=\wp(a)-\wp(u)
}
\end{equation}
holds as an equality of elliptic functions in \(u\) (both sides have the same divisor and principal part at \(u=0\)). As \(a\to 0\),
\[
\wp(a)=\frac{1}{a^2}+O(a^2),
\qquad
\sigma(a)=a+O(a^5),
\]
so \eqref{eq:sigmaidentity} implies the \emph{renormalized} expansion
\begin{equation}\label{eq:sigmapairjet}
\boxed{
\frac{\sigma(u-a)\sigma(u+a)}{\sigma(u)^2}
=1-a^2\wp(u)+a^4\left(\frac{1}{2}\wp(u)^2-\frac{1}{12}\wp''(u)\right)+O(a^6),
\qquad (a\to 0).
}
\end{equation}
In particular,
\[
\boxed{
\lim_{a\to 0}\ \frac{1}{a^2}\left(\frac{\sigma(u-a)\sigma(u+a)}{\sigma(u)^2}-1\right)=-\wp(u),
}
\]
so the first jet generator is \(-\wp(u)\), and higher jets arise from the higher-order terms in \eqref{eq:sigmapairjet}.

\medskip
\noindent\textbf{Why naive substitution loses jets.}
If one simply substitutes \(\nu^2=2\) (equivalently \(a(\nu)=0\)) into a sigma-product factor, then \(\sigma(u-a)\sigma(u+a)\to \sigma(u)^2\) and the jet information in the \(a^2\wp(u)\) and higher terms disappears. Jet-polynomial modes are recovered only by first refactorizing into coalescing divisor pairs and taking the explicit renormalized limits in \(a(\nu)\).

\section{Practical evaluation recipe (inputs \texorpdfstring{\(\to\)}{->} outputs)}
\label{sec:recipe}

\subsection{Inputs}

\begin{itemize}
\item Parameter: \(\nu>1\).
\item Forcing set: \(\{b_m\}_{m=1}^M\subset\C^\times\).
\item Residue data: \(r_{k,m},r_{k,m}^\sharp\in\C^2\) for \(k\in\{1,3\}\).
\end{itemize}

\subsection{Precompute torus data}

Compute
\[
k=\frac{1}{\nu},\quad k'=\sqrt{1-\frac{1}{\nu^2}},\quad
K=K(k),\quad K'=K(k'),\quad
\omega_1=\frac{K}{6\nu},\quad \omega_2=\frac{iK'}{6\nu},\quad \tau=i\frac{K'}{K},
\]
and set \(\Lambda=2\omega_1\Z+2\omega_2\Z\). For any \(u\), use \(v=\pi u/(2\omega_1)\) and \(\theta_j(v\mid\tau)\).

\subsection{Evaluate the uniformization and kernel scalars}

Compute \(z=6\nu u\), then
\[
t(u)=\frac{\sn(z;k)\dn(z;k)}{\cn(z;k)}=\frac{\theta_1(v)\theta_3(v)}{\theta_2(v)\theta_4(v)},
\qquad
Y(u)= -\frac{\nu^2-2\sn^2(z;k)+\sn^4(z;k)}{\nu\,\cn^2(z;k)},
\]
\[
s(u)=\frac{\nu(t(u)^2+1)+Y(u)}{2t(u)}=\frac{\theta_1(v)\theta_2(v)}{\theta_3(v)\theta_4(v)}.
\]
Then compute
\[
\delta_+(u),\ \delta_-(u),\ d(u),\ \beta(u),\ \gamma(u),
\qquad
G_\pm(u),\ J_1(u)=G_-^{-1}(u)G_+(u),\ J_3(u)=J_1(u)^{-1}.
\]
Define the domains \(D_\pm\) and \(\Gamma\) via \(|t(u)|\lessgtr 1\), \(|t(u)|=1\).

\subsection{Map forcing data to points and form partners}

For each \(b_m\), solve \(s(u_m)=b_m\) on the physical sheet via the corrected inversion:
\begin{enumerate}
\item Set \(s_\zeta=b_m\) and compute \(X\) from \eqref{eq:Xquad}:
\[
X^2-(1+b_m^2)X+\nu^2 b_m^2=0,
\]
choosing the physical branch by analytic continuation from \(b_m\to 0\) giving \(X\to 0\), i.e.
\[
X=\frac{1+b_m^2-\sqrt{(1+b_m^2)^2-4\nu^2 b_m^2}}{2}.
\]
\item Set
\[
u_m=\frac{1}{6\nu}\,F\!\left(\arcsin\sqrt{X}\ \middle|\ k\right)\quad(\mathrm{mod}\ \Lambda).
\]

\item \textbf{Half-period selection (important for a consistent lift).}
The quadratic \eqref{eq:Xquad} determines $X=\sn^2(6\nu u)$ and hence $u$ only up to \emph{half-period shifts}. In particular, one has
\begin{equation}\label{eq:half_period_s_actions}
s(u+\omega_1)=-s(u),\qquad s(u+\omega_2)=\frac{1}{s(u)},\qquad s(u+\omega_3)=-\frac{1}{s(u)},\qquad \omega_3=\omega_1+\omega_2.
\end{equation}
Consequently, when one needs a specific lift of a spectral point (for example when evaluating $Q(\zeta)$ for $\zeta$ on the real axis), it is safest to compute a base point $u_0$ from \eqref{eq:uofzeta}, then form the four candidates $u_0$, $u_0+\omega_1$, $u_0+\omega_2$, $u_0+\omega_3$, and select the candidate for which $s(u)=b$ (equivalently $|s(u)-b|$ is minimal). This selection is also essential for reciprocity checks, because the reciprocal-sheet value corresponds to the partner shift $u\mapsto u+\omega_2$ and hence to $s\mapsto 1/s$.
\item Define the partner point
\[
u_m^\sharp:=u_m+\omega_2\quad(\mathrm{mod}\ \Lambda),
\]
so \(s(u_m^\sharp)=1/b_m\) and \(u_m^\sharp\in D_-\).
\end{enumerate}

\subsection{Transport residues and check solvability}

Compute
\[
\begin{aligned}
R_{1,m} &= G_-(u_m)\,r_{1,m},\\
R_{1,m}^\sharp &= G_+(u_m^\sharp)\,r_{1,m}^\sharp,\\
R_{3,m} &= G_+(u_m)\,r_{3,m},\\
R_{3,m}^\sharp &= G_-(u_m^\sharp)\,r_{3,m}^\sharp.
\end{aligned}
\]
Check the solvability/Meixner edge constraints
\[
\sum_{m=1}^M R_{k,m}+\sum_{m=1}^M R_{k,m}^\sharp=0\in\C^2,\qquad k\in\{1,3\}.
\]
If violated, any attempt to build an elliptic no-jump vector \(\Phi_k\) with poles only at \(\{u_m,u_m^\sharp\}\) produces an unavoidable compensating pole at \(u=0\) (non-Meixner edge behavior).

\subsection{No-jump vectors, mode reconstruction, and evaluation in the spectral variable}

Define
\[
C(u,w)=\zetaW(u-w)-\zetaW(u)+\zetaW(w),
\]
(or its theta form in \S\ref{subsec:closedform}), and set
\[
\Phi_k(u)=\sum_{m=1}^M R_{k,m}C(u,u_m)+\sum_{m=1}^M R_{k,m}^\sharp C(u,u_m^\sharp),\qquad k\in\{1,3\},
\]
which satisfies \(\Phi_k(0)=0\) under the residue constraints.

Reconstruct
\[
\begin{aligned}
\Psi_{1,+} &= G_-^{-1}\Phi_1 \quad (u\in D_+),\\
\Psi_{1,-} &= G_+^{-1}\Phi_1 \quad (u\in D_-),\\
\Psi_{3,+} &= G_+^{-1}\Phi_3 \quad (u\in D_+),\\
\Psi_{3,-} &= G_-^{-1}\Phi_3 \quad (u\in D_-).
\end{aligned}
\]
using the explicit inverses in \S\ref{subsec:closedform}.

To evaluate in the spectral variable \(\zeta\), compute \(s_\zeta=e^{i(\zeta-\pi/4)}\), then compute \(X_{\mathrm{phys}}(\zeta)\) from \eqref{eq:Xphys}, then \(u(\zeta)\) from \eqref{eq:uofzeta}, and finally
\[
\Psi_k(\zeta)=\Psi_k(u(\zeta)).
\]
If residue conversion between \(u\) and \(\zeta\) is required, use
\[
\frac{du}{d\zeta}=i\,\frac{s(u)}{s'(u)},
\qquad
\frac{s'(u)}{s(u)}=\frac{\pi}{2\omega_1}\left(
\frac{\theta_1'(v)}{\theta_1(v)}+\frac{\theta_2'(v)}{\theta_2(v)}
-\frac{\theta_3'(v)}{\theta_3(v)}-\frac{\theta_4'(v)}{\theta_4(v)}
\right),
\qquad v=\frac{\pi u}{2\omega_1}.
\]

\section{From spectral solution to physical field}
\label{sec:physical}

This section completes the solution by providing a fully explicit bridge from the spectral/RH objects to the physical fields. We (i) recall the inverse Sommerfeld transform used to reconstruct the scattered exterior field and the transmitted interior field from Sommerfeld densities, (ii) identify these densities directly in terms of the RH no-jump vectors constructed in Theorem~\ref{thm:main}, and (iii) derive closed-form forcing data for plane-wave incidence, including the corresponding far-field diffraction coefficient.

\subsection{Sommerfeld representation and Sommerfeld densities}

Let $k_0>0$ be the exterior wavenumber and $k_1=\nu k_0$ the interior wavenumber. Following the classical theory for penetrable wedges \cite{Rawlins1999,Matuzas2601}, one represents the (exterior) scattered field in Sommerfeld form via a density $Q^{\mathrm{sc}}(\zeta)$ analytic in a vertical strip (Sommerfeld strip) and, in the present elliptic uniformization, $2\pi$-periodic in $\zeta$:
\begin{equation}\label{eq:sommerfeld_single}
\Phi^{\mathrm{sc}}(r,\theta)=\frac{1}{2\pi i}\int_{\mathcal{C}} e^{i k_0 r\cos(\zeta-\theta)}\,Q^{\mathrm{sc}}(\zeta)\,d\zeta ,
\end{equation}
where $Q^{\mathrm{sc}}$ is the scattered density (defined below) and $\mathcal{C}$ is the positively oriented boundary of the strip $\{\zeta:\ |\Re\zeta|<\pi\}$ (i.e.\ the union of the two vertical lines $\Re\zeta=\pm\pi$, indented in the usual way around any poles on the real axis).

In practical evaluation one parameterises the contour legs as \(\zeta=\pm\pi+i y\), \(y\in\mathbb{R}\), with the right leg oriented upward and the left leg oriented downward. Provided \(Q^{\mathrm{sc}}\) is analytic in the strip and has at most polynomial growth as \(|\Im\zeta|\to\infty\), the exponential factor \(e^{ik_0 r\cos(\zeta-\theta)}\) ensures rapid decay on the contour, and one may truncate the integral to \(|y|\le Y_{\max}\) with exponentially small error (see, e.g., \cite{BleisteinHandelsman1986,Wong2001}).

Equivalently, one may use the gauge-invariant ``difference'' form
\begin{equation}\label{eq:sommerfeld_difference}
\Phi^{\mathrm{sc}}(r,\theta)=\frac{1}{2\pi i}\int_{\gamma} e^{i k_0 r\cos z}\Bigl[Q^{\mathrm{sc}}(\theta+z)-Q^{\mathrm{sc}}(\theta-z)\Bigr]\,dz,
\end{equation}
where $\gamma$ denotes the positively oriented boundary of the fixed vertical strip $\{z:\ |\Re z|<\pi\}$ (equivalently, the union of the two vertical lines $\Re z=\pm\pi$, with the right line oriented upward and the left line oriented downward, and with the usual limiting-absorption indentations around any poles).
For fixed $\theta$, the substitution $\zeta=\theta+z$ maps $\partial\{\zeta:\ |\Re\zeta|<\pi\}$ to $\partial\{z:\ |\Re(z+\theta)|<\pi\}$; using the $2\pi$-periodicity of $Q^{\mathrm{sc}}$ and analytic deformation within the Sommerfeld strip, one may shift the contour back to the fixed boundary $\partial\{\,|\Re z|<\pi\,\}$ without crossing poles.
The form \eqref{eq:sommerfeld_difference} makes explicit that $Q^{\mathrm{sc}}$ is only defined up to an additive constant; only differences of $Q^{\mathrm{sc}}$ enter the physical field.

A completely analogous representation holds for the interior field in terms of an interior density $S(\zeta)$ and $k_1$:
\begin{equation}\label{eq:sommerfeld_interior}
\Phi^{\mathrm{int}}(r,\theta)=\frac{1}{2\pi i}\int_{\gamma} e^{i k_1 r\cos z}\Bigl[S(\theta+z)-S(\theta-z)\Bigr]\,dz,
\qquad |\theta|<\theta_w .
\end{equation}

\begin{lemma}[Sommerfeld nullity / uniqueness]\label{lem:sommerfeld_nullity}
Let $\mathcal{C}$ be the Sommerfeld contour described above and fix $k>0$.
Suppose $H(\zeta)$ is analytic in the Sommerfeld strip $\{\,\zeta:\ |\Re\zeta|<\pi\,\}$ and satisfies the same strip growth/decay bounds as the densities in \eqref{eq:sommerfeld_single}.
Define the associated Sommerfeld integral
\[
U(r,\theta):=\frac{1}{2\pi i}\int_{\mathcal{C}} e^{ik r\cos(\zeta-\theta)}\,H(\zeta)\,d\zeta .
\]
If $U(r,\theta)=0$ for all $r>0$ and for $\theta$ in an interval of length $2\theta_w$, then $H(\zeta)\equiv 0$ in the strip.
\end{lemma}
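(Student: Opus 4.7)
\emph{Approach.} I would combine three ingredients: (i) joint real-analyticity of $U(r,\theta)$ in order to extend the zero set from $(0,\infty)\times I$ to the full half-plane $(0,\infty)\times\mathbb{R}$; (ii) the Jacobi--Anger expansion, which reorganises $U$ into a Fourier--Bessel series in $\theta$ whose coefficients are the contour moments of $H$; and (iii) a moment-completeness argument that recovers $H$ from those moments under the assumed strip decay.

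\emph{Step 1 (Extending the zero set).} The Sommerfeld kernel $e^{ikr\cos(\zeta-\theta)}$ decays exponentially along the legs $\Re\zeta=\pm\pi$ as $|\Im\zeta|\to\infty$ (the real part of $\cos(\pm\pi+iy-\theta)$ behaves like $-\cos\theta\cosh y$), so combined with the assumed strip decay of $H$ the integrand is uniformly dominated on compact sets in $(r,\theta)$. Differentiation under the integral then shows that $U$ is jointly real-analytic on $(0,\infty)\times\mathbb{R}$ (and in fact analytic in a complex $\theta$-strip). The identity theorem in $\theta$ extends $U\equiv 0$ from $(0,\infty)\times I$ to $(0,\infty)\times\mathbb{R}$.

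\emph{Step 2 (Jacobi--Anger and moment vanishing).} Substituting
\[
e^{ikr\cos(\zeta-\theta)}=\sum_{n\in\Z} i^n J_n(kr)\,e^{in(\zeta-\theta)}
\]
and exchanging sum and contour integral (justified by the strip decay of $H$ together with the bound $|J_n(kr)|\le (kr/2)^{|n|}/|n|!$) yields
\[
U(r,\theta)=\sum_{n\in\Z} i^n J_n(kr)\,e^{-in\theta}\,\hat H_n,\qquad \hat H_n:=\frac{1}{2\pi i}\int_{\mathcal C} e^{in\zeta}H(\zeta)\,d\zeta .
\]
Fourier uniqueness in the $2\pi$-periodic variable $\theta$ combined with the fact that $J_n(\,\cdot\,)\not\equiv 0$ for every $n$ then forces $\hat H_n=0$ for all $n\in\Z$.

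\emph{Step 3 and the main obstacle.} The genuine difficulty is passing from $\hat H_n\equiv 0$ to $H\equiv 0$. The natural route is to invoke the one-sided Sommerfeld normalization implicit in the ``strip growth/decay bounds'' (typically $H(\zeta)\to 0$ as $\Im\zeta\to+\infty$, consistent with the outgoing/Hardy split used throughout the paper): one then closes $\mathcal C$ at $\Im\zeta=+\infty$ and applies Cauchy's theorem to identify the $\hat H_n$ with the Laurent coefficients of $H$ in the variable $w=e^{i\zeta}$, so that their vanishing forces $H\equiv 0$. This is the main obstacle because a strictly $2\pi$-periodic analytic density already produces $\hat H_n\equiv 0$ automatically, the two legs of $\mathcal C$ cancelling pairwise after identifying $H(\pi+iy)=H(-\pi+iy)$; this is precisely the ambiguity recorded in the paper's own remark below \eqref{eq:sommerfeld_difference} that $Q^{\mathrm{sc}}$ is defined only up to an additive constant. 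The lemma is therefore genuine only once the strip bounds are read as pinning down this additive ambiguity (via the one-sided decay), and otherwise the conclusion must be interpreted modulo the kernel of $H\mapsto U$, which in the difference form \eqref{eq:sommerfeld_difference} reduces to the constants.
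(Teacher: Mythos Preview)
The paper does not actually prove this lemma: it simply cites Noble and Daniele--Zich as standard references for the Sommerfeld/Malyuzhinets uniqueness principle. So there is no detailed argument in the paper to compare yours against.

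Your outline has gaps beyond the one you already flag in Step~3. In Step~1 you claim the kernel decays along the vertical legs because ``the real part of $\cos(\pm\pi+iy-\theta)$ behaves like $-\cos\theta\cosh y$'', but the modulus $|e^{ikr\cos(\zeta-\theta)}|$ is governed by the \emph{imaginary} part of $\cos(\zeta-\theta)$, not its real part. A direct computation gives
\[
\bigl|e^{ikr\cos(\pi+iy-\theta)}\bigr|=e^{kr\sin\theta\,\sinh y},
\]
which blows up on one half of each leg for every $\theta$ with $\sin\theta\neq0$. Hence on the fixed vertical-line contour the integral is not absolutely convergent, differentiation under the integral sign is not justified, and the analytic continuation in $\theta$ to all of $\mathbb R$ cannot be obtained this way. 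The classical Sommerfeld contours are deformed into steepest-descent valleys precisely to avoid this; once the contour becomes $\theta$-dependent, real-analyticity in $\theta$ must be argued differently. The same growth issue undermines Step~2: the individual moments $\hat H_n=\frac{1}{2\pi i}\int_{\mathcal C}e^{in\zeta}H(\zeta)\,d\zeta$ need not converge under a merely polynomial strip bound on $H$, so the Jacobi--Anger rearrangement is not termwise valid on $\mathcal C$ as written.

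Your Step~3 discussion, on the other hand, correctly isolates a genuine issue with the \emph{statement} itself: any $2\pi$-periodic strip-analytic $H$ makes the two legs cancel, so some Hardy-type one-sided normalization (implicit in the ``same strip growth/decay bounds'' but not spelled out in the lemma) is essential if one wants $H\equiv0$ rather than merely $H\equiv\text{const}$. The textbook proofs in the cited references build this normalization in from the outset via the Wiener--Hopf/Hardy split, and ``nullity'' there is a statement about the associated Hardy class, not about arbitrary strip-analytic densities. Your instinct to locate the difficulty there is sound; the cleaner route is to restate and prove the lemma directly in that Hardy framework rather than through a moment problem.
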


\begin{proof}[Proof (reference)]
A proof under hypotheses matching the present strip and growth conditions is standard in Sommerfeld/Malyuzhinets theory; see, for example, Noble~\cite[Chs.~2--4]{Noble1958} or Daniele--Zich~\cite[\S2]{DanieleZich2014}.
\end{proof}

\medskip
\noindent\textbf{How Lemma~\ref{lem:sommerfeld_nullity} is used.}
We invoke this uniqueness principle only in the following form:
if two admissible densities $H_1$ and $H_2$ produce identical Sommerfeld integrals on a wedge face for all $r>0$ (under the same strip analyticity and growth hypotheses), then $(H_1-H_2)$ has vanishing Sommerfeld integral and hence $H_1\equiv H_2$.
This justifies inferring pointwise spectral identities from equalities of boundary traces obtained by evaluating Sommerfeld representations and performing standard contour manipulations (e.g.\ integration by parts and analytic change of variables); cf.\ \cite{Rawlins1999,Matuzas2601}.
\medskip
\noindent\textbf{Densities from the RH solution.}
Given the no-jump vectors $\Phi_1$ and $\Phi_3$ constructed in \eqref{eq:Phi-k} and the lift $\zeta\mapsto u(\zeta)$ satisfying $s(u(\zeta))=s_\zeta=e^{i(\zeta-\theta_w)}$, we define
\begin{equation}\label{eq:density_from_Phi}
Q(\zeta):=\mathbf{e}_1^\top \Phi_1\!\bigl(u(\zeta)\bigr),\qquad
S(\zeta):=\mathbf{e}_1^\top \Phi_3\!\bigl(u(\zeta)\bigr).
\end{equation}
Because $\Phi_k$ is single-valued on $\Sigma_\nu$, \eqref{eq:density_from_Phi} is independent of whether the lift lies in $D_+$ or $D_-$. In numerical work, the half-period selection rule of \S\ref{sec:recipe} (cf. \eqref{eq:half_period_s_actions}) is used to choose a lift with $s(u)=s_\zeta$.

\begin{lemma}[Sommerfeld densities from the RH solution]\label{lem:densities_from_RH}
Assume that $u(\zeta)$ is the physical lift defined by \eqref{eq:Xphys}--\eqref{eq:uofzeta} and that it is chosen continuously (and analytically away from isolated singular points) in a Sommerfeld strip $\{\,\zeta:\ |\Re\zeta|<\pi\,\}$ so that $s(u(\zeta))=e^{i(\zeta-\theta_w)}$.
Assume moreover that the forcing poles are admissible (no pole lies on the contour $\mathcal{C}$; if necessary, they are displaced off the real axis by the limiting-absorption prescription) and that the resulting densities have at most polynomial growth as $|\Im\zeta|\to\infty$.
Then the functions $Q(\zeta)$ and $S(\zeta)$ defined in \eqref{eq:density_from_Phi} are $2\pi$-periodic meromorphic functions of $\zeta$ in that strip, with poles only at the prescribed forcing points (and their partner points) mapped through the lift.
In particular, if $Q^{\mathrm{sc}}$ is defined by \eqref{eq:Qsc_def} for the plane-wave forcing \eqref{eq:Qinc_def}, then inserting $Q^{\mathrm{sc}}$ into \eqref{eq:sommerfeld_single}--\eqref{eq:sommerfeld_difference} produces an exterior scattered field solving $(\Delta+k_0^2)\Phi^{\mathrm{sc}}=0$ away from the wedge faces; similarly, inserting $S$ into \eqref{eq:sommerfeld_interior} produces an interior field solving $(\Delta+k_1^2)\Phi^{\mathrm{int}}=0$ for $|\theta|<\theta_w$.
\end{lemma}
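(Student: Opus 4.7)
The plan is to prove the three assertions of the lemma in sequence: (a) $2\pi$-periodicity and meromorphicity of $Q$ and $S$ in the Sommerfeld strip with the claimed pole structure; (b) the exterior Helmholtz equation for the Sommerfeld integral of $Q^{\mathrm{sc}}$; and (c) the analogous interior statement for $S$ with wavenumber $k_1=\nu k_0$.

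For (a), I would analyse the composition
\(\zeta\mapsto s_\zeta\mapsto X_{\mathrm{phys}}(\zeta)\mapsto u(\zeta)\mapsto \Phi_k(u(\zeta))\) step by step. Since $s_\zeta=e^{i(\zeta-\pi/4)}$ is $2\pi$-periodic and \eqref{eq:Xquad} depends only on $s_\zeta^2$, the physical root $X_{\mathrm{phys}}$ is $2\pi$-periodic and analytic on the cut strip described in \S2.1. The lift $u(\zeta)$ from \eqref{eq:uofzeta} is analytic on each simply-connected component of the cut strip, and the multivaluedness of $\arcsin$ and of $F(\cdot\mid k)$ is absorbed into the period lattice, giving $u(\zeta+2\pi)\equiv u(\zeta)\pmod{\Lambda}$ once the half-period selection rule \eqref{eq:half_period_s_actions} has been used to pick the branch with $s(u)=s_\zeta$. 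Because $\Phi_k$ is single-valued and meromorphic on $\C/\Lambda$, the pullback $\Phi_k\circ u$ is therefore genuinely $2\pi$-periodic. Poles of the pullback occur precisely where $u(\zeta)\in\{u_m,u_m^\sharp\}\pmod{\Lambda}$; simplicity and residues follow from the chain rule combined with $du/d\zeta=i\,s(u)/s'(u)$ from \S\ref{subsec:jacobian}, once admissibility is invoked to rule out $s'(u)=0$ at the forcing points. The limiting-absorption hypothesis then excludes any pole on $\mathcal{C}$, so \eqref{eq:sommerfeld_single} converges absolutely.

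For parts (b) and (c), I would differentiate twice under the integral. The kernel $K(r,\theta;\zeta)=e^{ik_0 r\cos(\zeta-\theta)}$ satisfies $(\Delta_{r,\theta}+k_0^2)K=0$ pointwise in polar coordinates. On the contour legs $\zeta=\pm\pi+iy$ one computes $\Re\bigl[ik_0 r\cos(\zeta-\theta)\bigr]=\mp k_0 r\sin\theta\sinh y$, which yields super-exponential decay as $|y|\to\infty$ in the exterior angular sector (cf.\ \cite{BleisteinHandelsman1986,Wong2001}); combined with the polynomial strip growth of $Q^{\mathrm{sc}}$, Lebesgue-dominated differentiation justifies moving two $(r,\theta)$-derivatives under the integral. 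This gives $(\Delta+k_0^2)\Phi^{\mathrm{sc}}(r,\theta)=0$ pointwise away from the wedge faces. Part (c) is the verbatim repetition with $k_0\mapsto k_1$, $Q^{\mathrm{sc}}\mapsto S$, contour $\gamma$, and the angular restriction $|\theta|<\theta_w$.

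The hardest step will be the bookkeeping in (a) across the branch cuts $\mathcal{B}_1,\mathcal{B}_2$ of $X_{\mathrm{phys}}$. A priori the two sides of a cut correspond to the two roots $X_\pm$ of \eqref{eq:Xquad}, hence to two lifts related by a nontrivial involution of the Snell surface (the physical-sheet preimages of a given $s_\zeta$ are interchanged by $u\mapsto \omega_1-u$, as one reads off from the involution dictionary in \S1.3), and $\Phi_k$ is not invariant under that involution. For the Sommerfeld integral itself this is harmless, because the contour $\mathcal{C}$ at $\Re\zeta=\pm\pi$ does not meet the cuts at $\Re\zeta=\pi/4,-3\pi/4$, so the values of $Q,S$ on $\mathcal{C}$ are unambiguous; but rigorously establishing the lemma's ``meromorphic in the strip'' phrasing requires either restricting to the cut strip (on which $Q$ and $S$ are genuinely meromorphic in the stated sense) or reinterpreting the statement on the associated two-sheet cover, and it is here that the partner-point closure of the divisor $\{u_m\}\cup\{u_m^\sharp\}$ and the half-period selection rule \eqref{eq:half_period_s_actions} must be used carefully to produce a coherent evaluation recipe.
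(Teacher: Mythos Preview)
Your proposal is correct and follows essentially the same route as the paper's own proof sketch: composition of the elliptic $\Phi_k$ with the analytic lift gives meromorphy, $2\pi$-periodicity comes from $s_{\zeta+2\pi}=s_\zeta$ together with ellipticity of $\Phi_k$, and the Helmholtz equations follow by differentiating under the Sommerfeld integral. If anything, you are more careful than the paper in two places: you spell out the dominated-convergence justification for differentiating under the integral, and you flag the branch-cut bookkeeping for $X_{\mathrm{phys}}$ across $\mathcal B_1,\mathcal B_2$ (the paper's sketch simply asserts meromorphy ``in any strip where the lift is analytic'' and moves on). One minor slip: your real-part computation on the legs $\zeta=\pm\pi+iy$ gives $\Re[ik_0 r\cos(\zeta-\theta)]=k_0 r\sin\theta\,\sinh y$ with the \emph{same} sign on both legs, not $\mp$; the decay you need therefore comes from the orientation of the two legs (up on the right, down on the left) together with the angular restriction, rather than from a sign alternation in the exponent itself.
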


\begin{proof}[Proof sketch]
The composition of a meromorphic elliptic function with the analytic lift $u(\zeta)$ yields a meromorphic function of $\zeta$ in any strip where the lift is analytic; by construction the only singularities arise from the divisor of $\Phi_k$.
Since $s_{\zeta+2\pi}=s_\zeta$, the physical lift $u(\zeta+2\pi)$ differs from $u(\zeta)$ by a lattice period; because $\Phi_k$ is elliptic, this implies $Q(\zeta+2\pi)=Q(\zeta)$ and $S(\zeta+2\pi)=S(\zeta)$.
The strip-growth hypothesis can be checked directly from the elliptic structure: as $\Im\zeta\to+\infty$ one has $|s_\zeta|\to 0$ and hence $u(\zeta)\to 0$, while as $\Im\zeta\to-\infty$ one has $|s_\zeta|\to\infty$ and, by the half-period relation $s(u+\omega_2)=1/s(u)$, the lift may be chosen so that $u(\zeta)$ approaches a lattice translate of $0$.
Together with the Meixner normalisation $\Phi_k(0)=0$ this prevents any exponential growth of $Q$ and $S$ along the contour legs.
The Sommerfeld integrals \eqref{eq:sommerfeld_single}--\eqref{eq:sommerfeld_interior} are then standard: differentiation under the integral sign shows they satisfy the Helmholtz equations away from the wedge faces, and the contour choice together with analyticity in the strip enforces the Sommerfeld radiation condition (see, e.g., \cite{Rawlins1999,DanieleZich2014}).
\end{proof}

\subsection{Plane-wave forcing data}

For a plane wave incident in the exterior medium,
\begin{equation*}
\Phi^{\mathrm{inc}}(r,\theta)=e^{i k_0 r\cos(\theta-\theta_{\mathrm{inc}})},
\end{equation*}
we interpret $\theta_{\mathrm{inc}}$ as the \emph{propagation} (wavevector) angle of the plane wave, so that the wave arrives from the opposite direction $\theta_{\mathrm{inc}}+\pi$.
Accordingly, ``exterior incidence'' means that the arrival direction $\theta_{\mathrm{inc}}+\pi$ lies in the exterior angular sector $|\theta|>\theta_w$ (modulo $2\pi$).
It is convenient to work with the ``limiting absorption'' regularisation
\begin{equation*}
\zeta_{\mathrm{inc}}:=\theta_{\mathrm{inc}}+i\varepsilon,\qquad \varepsilon>0,
\end{equation*}
and the corresponding incident Sommerfeld density
\begin{equation}\label{eq:Qinc_def}
Q^{\mathrm{inc}}(\zeta)=\frac12\cot\!\left(\frac{\zeta-\zeta_{\mathrm{inc}}}{2}\right),\qquad
\Res_{\zeta=\zeta_{\mathrm{inc}}}Q^{\mathrm{inc}}=1.
\end{equation}

\medskip
\noindent\textbf{Remark (periodic representative).}
The choice \eqref{eq:Qinc_def} is \(2\pi\)-periodic in \(\zeta\) and has the same principal part as \((\zeta-\zeta_{\mathrm{inc}})^{-1}\) at \(\zeta=\zeta_{\mathrm{inc}}\).
Since the Sommerfeld field representation is given by a contour integral over the boundary of a strip, one may replace \(Q^{\mathrm{inc}}\) by any function with the same principal part inside that strip; the difference is analytic in the strip and, under the standard strip-growth assumption, yields an integrand that decays exponentially on the vertical contour legs for any fixed $r>0$.
Hence its contribution vanishes by Cauchy deformation (equivalently, by closing the contour with horizontal segments at $\Im\zeta=\pm Y$ and letting $Y\to\infty$).
In particular, in local calculations one may equivalently use \(Q^{\mathrm{inc}}(\zeta)=(\zeta-\zeta_{\mathrm{inc}})^{-1}\) as in \cite{Rawlins1999,Matuzas2601}.
The scattered density in \eqref{eq:sommerfeld_single}--\eqref{eq:sommerfeld_difference} is then
\begin{equation}\label{eq:Qsc_def}
Q^{\mathrm{sc}}(\zeta):=Q(\zeta)-Q^{\mathrm{inc}}(\zeta).
\end{equation}

\medskip
\noindent\textbf{Analyticity at the incident spectral point.}
With the residue prescription below, the total density $Q(\zeta)$ has a simple pole at $\zeta=\zeta_{\mathrm{inc}}$ with residue $1$, hence the scattered density $Q^{\mathrm{sc}}$ extends holomorphically across $\zeta_{\mathrm{inc}}$.
In particular, to evaluate the scattered density exactly at the incident spectral point one should not use the literal subtraction in \eqref{eq:Qsc_def}, but rather the cancellation limit
\begin{equation}\label{eq:Qsc_at_inc}
Q^{\mathrm{sc}}(\zeta_{\mathrm{inc}})=\lim_{\zeta\to\zeta_{\mathrm{inc}}}\left(Q(\zeta)-Q^{\mathrm{inc}}(\zeta)\right)
=\left.\frac{d}{d\zeta}\Bigl((\zeta-\zeta_{\mathrm{inc}})Q(\zeta)\Bigr)\right|_{\zeta=\zeta_{\mathrm{inc}}}.
\end{equation}

\medskip
\noindent\textbf{Interior density.}
For exterior plane-wave incidence there is no independent incident field inside the wedge; we therefore treat $S(\zeta)$ in \eqref{eq:sommerfeld_interior} as the total (and hence scattered) interior density, and we may write $S^{\mathrm{sc}}:=S$ when forming far-field coefficients.

To encode the forcing in the RH data, set
\begin{equation*}
b_{\mathrm{inc}}:=s_{\zeta_{\mathrm{inc}}}=e^{i(\zeta_{\mathrm{inc}}-\theta_w)},
\end{equation*}
and let $u_{\mathrm{inc}}$ be a lift satisfying $s(u_{\mathrm{inc}})=b_{\mathrm{inc}}$. Define the Snell-map derivative at this point by
\begin{equation}\label{eq:wprime_def}
w'(u_{\mathrm{inc}};\nu)=\nu\,\frac{t(u_{\mathrm{inc}})^2-1}{Y(u_{\mathrm{inc}})} ,
\end{equation}
and the vector
\begin{equation*}
v_{\mathrm{inc}}:=
\begin{pmatrix}
A_0\\B_0
\end{pmatrix},
\qquad
A_0=\frac{1+w'}{2},\ \ B_0=\frac{1-w'}{2},
\end{equation*}
where $w'=w'(u_{\mathrm{inc}};\nu)$.

Let $\zeta\mapsto u(\zeta)$ denote the chosen lift. The Jacobian $du/d\zeta$ at $\zeta=\zeta_{\mathrm{inc}}$ can be computed analytically from the inversion formula \eqref{eq:Xquad}--\eqref{eq:uofzeta}. Writing $b=e^{i(\zeta-\theta_w)}$ and $\Delta=(1+b^2)^2-4\nu^2 b^2$, one finds
\begin{equation}\label{eq:dudzeta_closed}
\frac{du}{d\zeta}(\zeta)=\frac{i b}{12\nu}\,
\frac{dX/db}{\sqrt{X(1-X)(1-k^2 X)}},
\qquad
\frac{dX}{db}=b-\frac{b(1+b^2-2\nu^2)}{\sqrt{\Delta}},
\end{equation}
with $k=1/\nu$ and $X=X(b)$ given by \eqref{eq:Xphys}.

The residue vectors for the RH modes may then be chosen as
\begin{align}
r_{1,\mathrm{inc}}
&=\frac{\left.\dfrac{du}{d\zeta}\right|_{\zeta=\zeta_{\mathrm{inc}}}}{\mathbf{e}_1^\top\!\left(G_-(u_{\mathrm{inc}})\,v_{\mathrm{inc}}\right)}\,v_{\mathrm{inc}},\qquad
r_{1,\mathrm{inc}}^\sharp=-G_+\!\left(u_{\mathrm{inc}}^\sharp\right)^{-1}G_-(u_{\mathrm{inc}})\,r_{1,\mathrm{inc}}, \label{eq:r1_inc}\\[2mm]
r_{3,\mathrm{inc}}
&=r_{1,\mathrm{inc}},\qquad
r_{3,\mathrm{inc}}^\sharp=-G_-\!\left(u_{\mathrm{inc}}^\sharp\right)^{-1}G_+(u_{\mathrm{inc}})\,r_{3,\mathrm{inc}},\label{eq:r3_inc}
\end{align}
where $u_{\mathrm{inc}}^\sharp=u_{\mathrm{inc}}+\omega_2$ is the partner point. With these choices, the transported residues satisfy the Meixner condition $\sum_m(R_{k,m}+R_{k,m}^\sharp)=0$ automatically (cf. \eqref{eq:ressum}).

\subsection{Far-field diffraction coefficient}

Deforming the contour in \eqref{eq:sommerfeld_single} through the saddle at $\zeta=\theta$ yields the standard far-field asymptotic (for observation directions away from poles/shadow boundaries of $Q^{\mathrm{sc}}$):
\begin{equation}\label{eq:far_field_asymptotic}
\Phi^{\mathrm{sc}}(r,\theta)\sim e^{i k_0 r-i3\pi/4}\sqrt{\frac{2}{\pi k_0 r}}\,Q^{\mathrm{sc}}(\theta),\qquad r\to\infty .
\end{equation}
Equivalently, in the common dimensionless convention
\begin{equation*}
\Phi^{\mathrm{sc}}(r,\theta)\sim D(\theta)\,\frac{e^{i k_0 r}}{\sqrt{k_0 r}},
\qquad
D(\theta)=e^{-i3\pi/4}\sqrt{\frac{2}{\pi}}\,Q^{\mathrm{sc}}(\theta).
\end{equation*}
(The phase $e^{-i3\pi/4}$ comes from the steepest-descent evaluation; alternative conventions differ by a constant phase factor.)

A corresponding interior far-field coefficient is obtained from \eqref{eq:sommerfeld_interior} by replacing $(k_0,Q^{\mathrm{sc}})$ with $(k_1,S^{\mathrm{sc}})$.

\subsection{Reciprocity (physical symmetry of the diffraction coefficient)}
\label{sec:reciprocity}

A key physical diagnostic for any penetrable-wedge solution is \emph{reciprocity}.
Because the impedance-matched transmission problem has real coefficients and enforces continuity of the field and of the normal derivative across each interface, the underlying operator is self-adjoint (in an appropriate radiation setting). As a consequence, the exterior and interior scattering processes are reciprocal: exchanging source and receiver directions (with the standard reversal of propagation directions) leaves the far-field amplitudes invariant.

\begin{theorem}[Reciprocity benchmark]\label{thm:reciprocity}
Let $\Phi_{\rm sc}(\theta,r)$ be the \emph{exterior} scattered field produced by an exterior incident plane wave of propagation direction $\theta_{\rm inc}$ for the penetrable wedge transmission problem described above.
Whenever the far-field diffraction coefficient $D(\theta;\theta_{\rm inc})$ is well-defined as the coefficient of the outgoing cylindrical wave $e^{ik_0 r}/\sqrt{r}$ (i.e.\ after subtracting any non-decaying geometrical-optics plane-wave contributions, if present), it satisfies the reciprocity identity
\begin{equation}\label{eq:reciprocity_D}
D(\theta;\theta_{\rm inc})=D(\theta_{\rm inc}+\pi;\theta+\pi).
\end{equation}
In any region where $D$ is obtained directly from a saddle-point evaluation of the Sommerfeld integral with leading term $D(\theta;\theta_{\rm inc})\propto Q_{\rm sc}(\theta;\theta_{\rm inc})$, the equivalent density relation
\begin{equation}\label{eq:reciprocity_Q}
Q_{\rm sc}(\theta;\theta_{\rm inc})=Q_{\rm sc}(\theta_{\rm inc}+\pi;\theta+\pi)
\end{equation}
must also hold for the appropriate boundary values of $Q_{\rm sc}$.
\end{theorem}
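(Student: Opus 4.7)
My plan is to derive \eqref{eq:reciprocity_D} from self-adjointness of the impedance-matched transmission operator via Green's second identity; the density form \eqref{eq:reciprocity_Q} then follows immediately from the linear relation $D(\theta;\theta_{\mathrm{inc}})=e^{-i3\pi/4}\sqrt{2/\pi}\,Q^{\mathrm{sc}}(\theta;\theta_{\mathrm{inc}})$ recorded in \S6.3.

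First I fix two propagation directions $\theta_1,\theta_2$ and denote by $\Phi^{(j)}$ the total field generated by the incident plane wave $\Phi^{\mathrm{inc},(j)}=e^{ik_0 r\cos(\theta-\theta_j)}$, decomposed as $\Phi^{(j)}=\Phi^{\mathrm{inc},(j)}+\Phi^{\mathrm{sc},(j)}$ in the exterior and as $\Phi^{(j)}=\Phi^{\mathrm{int},(j)}$ inside the wedge. The plan is to apply Green's second identity to the pair $(\Phi^{(1)},\Phi^{(2)})$ on the punctured ball $B_R\setminus\overline{B_\rho}$, computing it separately on the exterior and interior subregions. Because both factors satisfy the same scalar Helmholtz equation in each subregion (with $k_0$ outside and $k_1$ inside), the volume integrand $\Phi^{(1)}\Delta\Phi^{(2)}-\Phi^{(2)}\Delta\Phi^{(1)}$ vanishes identically in each, and the identity reduces to a boundary contribution.

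The boundary integrals split into contributions from the two wedge faces, the small circle $\partial B_\rho$ around the apex, and the large circle $\partial B_R$. The impedance-matched transmission conditions $\Phi^{\mathrm{ext}}=\Phi^{\mathrm{int}}$ and $\partial_n\Phi^{\mathrm{ext}}=\partial_n\Phi^{\mathrm{int}}$ (with the common geometric normal of \eqref{eq:normal_on_faces}) cause the exterior and interior face contributions to cancel pairwise---this is the algebraic heart of reciprocity in lossless matched media. The Meixner edge condition forces the $\partial B_\rho$ term to vanish as $\rho\downarrow 0$, since the fields and their normal derivatives have only integrable singularities at the apex. Hence Green's identity reduces to the statement that the outer-boundary term on $\partial B_R$ vanishes in the limit $R\to\infty$. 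Expanding that term by $\Phi^{(j)}=\Phi^{\mathrm{inc},(j)}+\Phi^{\mathrm{sc},(j)}$ gives four pieces: the inc--inc piece is a direct plane-wave Wronskian which vanishes by inspection, the sc--sc piece vanishes by applying the Sommerfeld radiation condition to both factors, and the remaining two cross-terms are evaluated by stationary phase. Each cross-term has a single saddle at the angle where the plane-wave radial phase matches the outgoing cylindrical-wave phase, and applying \eqref{eq:far_field_asymptotic} at that saddle produces exactly $D(\theta_2+\pi;\theta_1)$ and $D(\theta_1+\pi;\theta_2)$ with a common universal prefactor. Equating the two cross-terms yields \eqref{eq:reciprocity_D}.

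The main obstacle I expect is the stationary-phase bookkeeping in the final step: one must verify that the universal prefactor $e^{-i3\pi/4}\sqrt{2/(\pi k_0 R)}$ appears with identical sign and phase in both cross-terms, so that their equality delivers \eqref{eq:reciprocity_D} without a stray phase factor or a complex conjugation (the latter would produce the wrong ``time-reversed'' reciprocity statement). A cleaner purely spectral alternative would be to exploit the involutions $E,H,\iota,S$ catalogued in \S1.3 to establish directly a symmetry of the no-jump vector $\Phi_1$ under $\theta\leftrightarrow\theta_{\mathrm{inc}}$; however, as \S1.6 emphasizes, the correct plane-wave closure is precisely the open point of the present paper, so the PDE argument above is the more conservative route and is the one I would write up in detail.
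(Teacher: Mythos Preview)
Your proposal is correct and follows essentially the same route as the paper's own proof: Green's second identity on the annulus $B_R\setminus\overline{B_\rho}$ split into exterior and interior sectors, cancellation of the face contributions via the impedance-matched transmission conditions, vanishing of the apex term by the Meixner condition, and extraction of the reciprocity identity from the inc--sc cross terms on $\partial B_R$ by stationary phase. The stationary-phase bookkeeping you flag as the main obstacle is exactly the step the paper works out explicitly (via the estimate $\int_0^{2\pi} a(\theta)\,e^{ik_0 R(1+\cos(\theta-\alpha))}\,d\theta = e^{i\pi/4}\sqrt{2\pi/(k_0 R)}\,a(\alpha+\pi)+\mathcal{O}(R^{-3/2})$), and the common prefactor indeed cancels to give \eqref{eq:reciprocity_D} without a stray phase.
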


\begin{proof}[Proof (Green/Lorentz reciprocity, with explicit boundary cancellation)]
Fix two exterior incidence directions $\theta_{\mathrm{inc}}^{(1)}$ and $\theta_{\mathrm{inc}}^{(2)}$ and let $\Phi^{(j)}$ denote the corresponding \emph{total} fields.
In the exterior region,
\begin{equation*}
\Phi^{(j)}(r,\theta)=\Phi^{(j),\mathrm{inc}}(r,\theta)+\Phi^{(j),\mathrm{sc}}(r,\theta),
\qquad
\Phi^{(j),\mathrm{inc}}(r,\theta)=e^{ik_0 r\cos(\theta-\theta_{\mathrm{inc}}^{(j)})},
\end{equation*}
and $\Phi^{(j),\mathrm{sc}}$ satisfies the Sommerfeld radiation condition in the exterior.
In the interior wedge sector $|\theta|<\theta_w$, let $\Phi^{(j),\mathrm{int}}$ denote the corresponding transmitted field (which is outgoing in the Sommerfeld/Malyuzhinets sense).

\medskip
\noindent\textbf{Step 1: Green's identity on a truncated disk and cancellation of interface terms.}
Let $B_R=\{x\in\mathbb{R}^2:|x|<R\}$ and $B_\rho=\{x:|x|<\rho\}$ with $0<\rho<R$.
Write
\(\Omega_{R,\rho}^{\mathrm{ext}}=(B_R\setminus B_\rho)\cap\{\text{exterior}\}\) and
\(\Omega_{R,\rho}^{\mathrm{int}}=(B_R\setminus B_\rho)\cap\{\text{interior wedge}\}\).
Apply Green's second identity to $\Phi^{(1)}$ and $\Phi^{(2)}$ on each region:
\begin{equation}
\int_{\Omega_{R,\rho}^{\alpha}}\bigl(\Phi^{(1)}\Delta\Phi^{(2)}-\Phi^{(2)}\Delta\Phi^{(1)}\bigr)\,dx
=
\int_{\partial\Omega_{R,\rho}^{\alpha}}\bigl(\Phi^{(1)}\partial_n\Phi^{(2)}-\Phi^{(2)}\partial_n\Phi^{(1)}\bigr)\,ds,
\qquad \alpha\in\{\mathrm{ext},\mathrm{int}\}.
\label{eq:green_identity_regions}
\end{equation}
In each homogeneous region $\alpha$, both fields satisfy $(\Delta+k_\alpha^2)\Phi^{(j)}=0$ (with $k_{\mathrm{ext}}=k_0$ and $k_{\mathrm{int}}=k_1$), hence the volume integrals vanish identically.
Summing the boundary integrals from $\alpha=\mathrm{ext}$ and $\alpha=\mathrm{int}$ yields
\begin{equation}
\begin{aligned}
\int_{\partial\Omega_{R,\rho}^{\mathrm{ext}}}\!\bigl(\Phi^{(1)}\partial_n\Phi^{(2)}-\Phi^{(2)}\partial_n\Phi^{(1)}\bigr)\,ds
\;+
\int_{\partial\Omega_{R,\rho}^{\mathrm{int}}}\!\bigl(\Phi^{(1)}\partial_n\Phi^{(2)}-\Phi^{(2)}\partial_n\Phi^{(1)}\bigr)\,ds
&=0.
\end{aligned}
\label{eq:sum_boundary_integrals}
\end{equation}

The boundaries $\partial\Omega_{R,\rho}^{\mathrm{ext}}$ and $\partial\Omega_{R,\rho}^{\mathrm{int}}$ each consist of three parts: an outer circular arc on $|x|=R$, an inner circular arc on $|x|=\rho$, and two radial segments along the wedge faces $\theta=\pm\theta_w$.
On each radial segment (interface) we choose a \emph{single geometric unit normal} $n$ (as in \S\ref{subsec:jump_from_transmission}), pointing from the interior into the exterior.
The outward normal for the interior region is then $n$, while the outward normal for the exterior region is $-n$.
Using the transmission conditions $\Phi_{\mathrm{ext}}=\Phi_{\mathrm{int}}$ and $\partial_n\Phi_{\mathrm{ext}}=\partial_n\Phi_{\mathrm{int}}$ (with this \emph{common} $n$), the interface contributions in \eqref{eq:sum_boundary_integrals} cancel pointwise:
\begin{align*}
&\bigl(\Phi^{(1)}_{\mathrm{ext}}\partial_{(-n)}\Phi^{(2)}_{\mathrm{ext}}-\Phi^{(2)}_{\mathrm{ext}}\partial_{(-n)}\Phi^{(1)}_{\mathrm{ext}}\bigr)
+\bigl(\Phi^{(1)}_{\mathrm{int}}\partial_n\Phi^{(2)}_{\mathrm{int}}-\Phi^{(2)}_{\mathrm{int}}\partial_n\Phi^{(1)}_{\mathrm{int}}\bigr)\\
&= -\bigl(\Phi^{(1)}_{\mathrm{ext}}\partial_n\Phi^{(2)}_{\mathrm{ext}}-\Phi^{(2)}_{\mathrm{ext}}\partial_n\Phi^{(1)}_{\mathrm{ext}}\bigr)
+\bigl(\Phi^{(1)}_{\mathrm{int}}\partial_n\Phi^{(2)}_{\mathrm{int}}-\Phi^{(2)}_{\mathrm{int}}\partial_n\Phi^{(1)}_{\mathrm{int}}\bigr)=0.
\end{align*}

\medskip
\noindent\textbf{Step 2: The apex term vanishes by the Meixner condition.}
The Meixner edge condition implies that $\Phi^{(j)}$ and $\nabla\Phi^{(j)}$ have finite local energy near $r=0$.
By Cauchy--Schwarz, the integral on the inner circle $|x|=\rho$ satisfies
\begin{equation*}
\left|\int_{|x|=\rho}\bigl(\Phi^{(1)}\partial_r\Phi^{(2)}-\Phi^{(2)}\partial_r\Phi^{(1)}\bigr)\,ds\right|
\le C\,\rho^{1/2}\Bigl(\|\Phi^{(1)}\|_{H^1(B_{2\rho}\setminus B_\rho)}\|\Phi^{(2)}\|_{H^1(B_{2\rho}\setminus B_\rho)}\Bigr),
\end{equation*}
which tends to $0$ as $\rho\to 0$.
Letting $\rho\to 0$ in \eqref{eq:sum_boundary_integrals} therefore yields the single outer-boundary identity
\begin{equation}
\int_{|x|=R}\bigl(\Phi^{(1)}\partial_r\Phi^{(2)}-\Phi^{(2)}\partial_r\Phi^{(1)}\bigr)\,ds=0,
\qquad R>0.
\label{eq:outer_circle_identity}
\end{equation}

\medskip
\noindent\textbf{Step 3: Letting $R\to\infty$ and extracting the far-field coefficients.}
Expand the integrand in \eqref{eq:outer_circle_identity} using $\Phi^{(j)}=\Phi^{(j),\mathrm{inc}}+\Phi^{(j),\mathrm{sc}}$ on the exterior arcs.
Terms involving only the incident waves produce oscillatory integrals that vanish as $R\to\infty$.
Terms involving only scattered/outgoing components also vanish: using the exterior far-field expansion
\begin{equation*}
\begin{aligned}
\Phi^{(j),\mathrm{sc}}(R,\theta)
&=D(\theta;\theta_{\mathrm{inc}}^{(j)})\,\frac{e^{ik_0 R}}{\sqrt{k_0 R}}+\mathcal{O}(R^{-3/2}),\\
\partial_r\Phi^{(j),\mathrm{sc}}(R,\theta)
&=ik_0\,D(\theta;\theta_{\mathrm{inc}}^{(j)})\,\frac{e^{ik_0 R}}{\sqrt{k_0 R}}+\mathcal{O}(R^{-3/2}).
\end{aligned}
\end{equation*}
one finds
\(\Phi^{(1),\mathrm{sc}}\partial_r\Phi^{(2),\mathrm{sc}}-\Phi^{(2),\mathrm{sc}}\partial_r\Phi^{(1),\mathrm{sc}}=\mathcal{O}(R^{-3/2})\), so its integral over the circle is $\mathcal{O}(R^{-1/2})\to 0$.
The same argument applies to the outgoing interior components on the interior arc(s).

Thus the only potentially non-vanishing contributions as $R\to\infty$ come from the cross terms ``incident $\times$ scattered''.
Using $\partial_r\Phi^{(j),\mathrm{inc}}=ik_0\cos(\theta-\theta_{\mathrm{inc}}^{(j)})\,\Phi^{(j),\mathrm{inc}}$ on $|x|=R$, the relevant part of \eqref{eq:outer_circle_identity} can be written (up to terms vanishing as $R\to\infty$) as
\begin{align}
0&=\int_0^{2\pi}\Bigl[\Phi^{(1),\mathrm{inc}}\,\partial_r\Phi^{(2),\mathrm{sc}}-\Phi^{(2),\mathrm{sc}}\,\partial_r\Phi^{(1),\mathrm{inc}}\Bigr]R\,d\theta \notag\\
&\quad -\int_0^{2\pi}\Bigl[\Phi^{(2),\mathrm{inc}}\,\partial_r\Phi^{(1),\mathrm{sc}}-\Phi^{(1),\mathrm{sc}}\,\partial_r\Phi^{(2),\mathrm{inc}}\Bigr]R\,d\theta
+o(1).
\label{eq:cross_terms_only}
\end{align}
Insert the far-field expansions for $\Phi^{(j),\mathrm{sc}}$ and $\partial_r\Phi^{(j),\mathrm{sc}}$.
Each of the four integrals in \eqref{eq:cross_terms_only} reduces to an oscillatory integral of the form
\begin{equation*}
\int_0^{2\pi} e^{ik_0 R\cos(\theta-\alpha)}\,a(\theta)\,e^{ik_0 R}\,d\theta
=\int_0^{2\pi} a(\theta)\,e^{ik_0 R(1+\cos(\theta-\alpha))}\,d\theta,
\end{equation*}
with $a(\theta)$ proportional to the far-field coefficient of the scattered field.
As $R\to\infty$, steepest descent (stationary phase) shows that the unique stationary point contributing at order $R^{-1/2}$ is $\theta=\alpha+\pi$, since $1+\cos(\theta-\alpha)$ vanishes quadratically there.
More precisely, for smooth $a$ one has the standard estimate
\begin{equation}
\int_0^{2\pi} a(\theta)\,e^{ik_0 R(1+\cos(\theta-\alpha))}\,d\theta
=e^{i\pi/4}\sqrt{\frac{2\pi}{k_0 R}}\,a(\alpha+\pi)+\mathcal{O}(R^{-3/2}).
\label{eq:stationary_phase_key}
\end{equation}
Combining \eqref{eq:cross_terms_only}--\eqref{eq:stationary_phase_key} and simplifying the common constants gives
\begin{equation*}
D\bigl(\theta_{\mathrm{inc}}^{(1)}+\pi;\theta_{\mathrm{inc}}^{(2)}\bigr)
=D\bigl(\theta_{\mathrm{inc}}^{(2)}+\pi;\theta_{\mathrm{inc}}^{(1)}\bigr).
\end{equation*}
Finally, take $\theta_{\mathrm{inc}}^{(1)}=\theta_{\mathrm{inc}}$ and $\theta_{\mathrm{inc}}^{(2)}=\theta+\pi$.
Using the $2\pi$-periodicity of the far-field pattern in its first argument yields
\begin{equation*}
D(\theta;\theta_{\mathrm{inc}})
=D(\theta_{\mathrm{inc}}+\pi;\theta+\pi),
\end{equation*}
which is exactly \eqref{eq:reciprocity_D}.

The equivalence of \eqref{eq:reciprocity_D} and \eqref{eq:reciprocity_Q} follows from the fixed constant proportionality between $D(\theta)$ and $Q^{\mathrm{sc}}(\theta)$ in \S6.3.
\end{proof}

\medskip
\noindent\textbf{Remarks.}
\begin{itemize}
\item \emph{Propagation vs.\ arrival angle.}
If one prefers to parameterize incidence by the \emph{arrival} direction $\theta_{\mathrm{arr}}:=\theta_{\mathrm{inc}}+\pi$ (the direction \emph{from which} the wave approaches the apex), then \eqref{eq:reciprocity_D} can be rewritten as
$
D(\theta;\theta_{\mathrm{arr}}-\pi)=D(\theta_{\mathrm{arr}};\theta+\pi).
$
The precise appearance of the $\pi$-shifts depends only on this convention.

\item \emph{Gauge invariance.}
The Sommerfeld densities are defined only up to additive constants (cf.\ \eqref{eq:sommerfeld_difference}), so reciprocity should always be interpreted for the \emph{scattered} density $Q^{\mathrm{sc}}$ (or, equivalently, for the far-field coefficient $D$), both of which are gauge-invariant.
\end{itemize}

\section{Worked example: symbolic forcing at \texorpdfstring{$\nu=3/2$, $\theta_{\mathrm{inc}}=5\pi/6$}{nu=3/2, thetaInc=5pi/6}}
\label{sec:example}

This section mirrors the reproducible symbolic style used in \cite{Matuzas2601}: all steps are written in closed form in terms of standard special functions, without committing to floating-point evaluation.
We take the wedge half-angle $\theta_w=\pi/4$ and keep the exterior wavenumber $k_0>0$ symbolic. For plane-wave forcing we use the limiting-absorption prescription
\(\zeta_{\mathrm{inc}}=\theta_{\mathrm{inc}}+i\varepsilon\) with $\varepsilon>0$ fixed until the end.

\medskip
\noindent\textbf{Computation outline.}
The example is organised into four transparent steps:
(i) compute the elliptic parameters and periods,
(ii) determine the forcing spectral point \(\zeta_{\mathrm{inc}}\) and its physical lift \(u_{\mathrm{inc}}\),
(iii) assemble the residue vectors and transported residues using the factor matrices, and
(iv) reconstruct the Sommerfeld density \(Q^{\mathrm{sc}}\) and the diffraction coefficient \(D(\theta)\).

\subsection{Elliptic parameters for \texorpdfstring{$\nu=3/2$}{nu=3/2}}

With $\nu=3/2$ the modulus is
\[
k=\frac{1}{\nu}=\frac{2}{3},\qquad k'=\sqrt{1-k^2}=\frac{\sqrt{5}}{3}.
\]
We write the complete elliptic integrals in the standard form
\[
K:=K(k)=\int_0^{\pi/2}\frac{d\phi}{\sqrt{1-k^2\sin^2\phi}},\qquad
K':=K(k')=\int_0^{\pi/2}\frac{d\phi}{\sqrt{1-k'^2\sin^2\phi}},
\]
and the half-periods become
\[
\omega_1=\frac{K}{6\nu}=\frac{K}{9},\qquad
\omega_2=i\frac{K'}{6\nu}=i\frac{K'}{9},\qquad
\tau=\frac{\omega_2}{\omega_1}=i\frac{K'}{K},\qquad
q=e^{i\pi\tau}=\exp\!\left(-\pi\frac{K'}{K}\right).
\]
All theta functions $\theta_j(v)=\theta_j(v\mid\tau)$, the Weierstrass functions $(\sigma,\zetaW,\wp)$, and the uniformizing maps $t(u),Y(u),s(u)$ are evaluated with this $\tau$.

\subsection{Forcing spectral point and uniformizing lift}

For $\theta_{\mathrm{inc}}=5\pi/6$ one has
\[
\zeta_{\mathrm{inc}}=\frac{5\pi}{6}+i\varepsilon,
\qquad
b_{\mathrm{inc}}:=s_{\zeta_{\mathrm{inc}}}=e^{i(\zeta_{\mathrm{inc}}-\theta_w)}=e^{i(5\pi/6-\pi/4)}e^{-\varepsilon}=e^{i7\pi/12}\,e^{-\varepsilon}.
\]
The physical inversion uses \eqref{eq:Xphys}. Specializing \eqref{eq:Xphys} to $\nu=3/2$ gives
\begin{equation}\label{eq:Xphys_example}
X_{\mathrm{inc}}:=X_{\mathrm{phys}}(\zeta_{\mathrm{inc}})
=\frac{1+b_{\mathrm{inc}}^2-\sqrt{(1+b_{\mathrm{inc}}^2)^2-9b_{\mathrm{inc}}^2}}{2}.
\end{equation}
Then a physical lift is obtained by
\begin{equation}\label{eq:uinc_example}
u_{\mathrm{inc}}:=\frac{1}{6\nu}F\!\left(\arcsin\sqrt{X_{\mathrm{inc}}}\,\middle|\,k\right)=\frac{1}{9}
F\!\left(\arcsin\sqrt{X_{\mathrm{inc}}}\,\middle|\,\frac{2}{3}\right),
\end{equation}
with the branch chosen so that $s(u_{\mathrm{inc}})=b_{\mathrm{inc}}$ and $u(\zeta)$ varies continuously with $\varepsilon>0$.
The partner point is
\[
u_{\mathrm{inc}}^\sharp:=u_{\mathrm{inc}}+\omega_2\qquad (\mathrm{mod}\ \Lambda).
\]

\subsection{Plane-wave forcing residues (symbolic form)}

Compute the Snell derivative at the forcing point using \eqref{eq:wprime_def}:
\[
w'_{\mathrm{inc}}:=w'(u_{\mathrm{inc}};\nu)=\nu\,\frac{t(u_{\mathrm{inc}})^2-1}{Y(u_{\mathrm{inc}})}\quad\text{with }\nu=\frac{3}{2}.
\]
Define
\[
A_0:=\frac{1+w'_{\mathrm{inc}}}{2},\qquad B_0:=\frac{1-w'_{\mathrm{inc}}}{2},\qquad
v_{\mathrm{inc}}:=\begin{pmatrix}A_0\\ B_0\end{pmatrix}.
\]
The Jacobian $du/d\zeta$ at $\zeta=\zeta_{\mathrm{inc}}$ follows from \eqref{eq:dudzeta_closed}:
\begin{equation}\label{eq:dudzeta_example}
\left.\frac{du}{d\zeta}\right|_{\zeta=\zeta_{\mathrm{inc}}}
=\frac{i b_{\mathrm{inc}}}{12\nu}
\frac{\left.\dfrac{dX}{db}\right|_{b=b_{\mathrm{inc}}}}{\sqrt{X_{\mathrm{inc}}(1-X_{\mathrm{inc}})(1-k^2 X_{\mathrm{inc}})}}
=\frac{i b_{\mathrm{inc}}}{18}
\frac{\left.\dfrac{dX}{db}\right|_{b=b_{\mathrm{inc}}}}{\sqrt{X_{\mathrm{inc}}(1-X_{\mathrm{inc}})\left(1-\frac{4}{9} X_{\mathrm{inc}}\right)}}.
\end{equation}
The forcing residue vectors are then given by the general formulas \eqref{eq:r1_inc}--\eqref{eq:r3_inc}.
In particular, for the $k=1$ mode:
\[
r_{1,\mathrm{inc}}
=\frac{\left.\dfrac{du}{d\zeta}\right|_{\zeta=\zeta_{\mathrm{inc}}}}{\mathbf{e}_1^\top\!\left(G_-(u_{\mathrm{inc}})\,v_{\mathrm{inc}}\right)}\,v_{\mathrm{inc}},
\qquad
r_{1,\mathrm{inc}}^\sharp=-G_+\!\left(u_{\mathrm{inc}}^\sharp\right)^{-1}G_-(u_{\mathrm{inc}})\,r_{1,\mathrm{inc}},
\]
and the transported residues satisfy
\[
R_{1,\mathrm{inc}}=G_-(u_{\mathrm{inc}})\,r_{1,\mathrm{inc}},
\qquad
R_{1,\mathrm{inc}}^\sharp=G_+(u_{\mathrm{inc}}^\sharp)\,r_{1,\mathrm{inc}}^\sharp=-R_{1,\mathrm{inc}}.
\]
Hence the Meixner solvability constraint \eqref{eq:ressum} holds automatically for this single forcing pair.

\subsection{Symbolic reconstruction of \texorpdfstring{$Q^{\mathrm{sc}}$ and $D(\theta)$}{Qsc and D(theta)}}

With only the forcing pair $(u_{\mathrm{inc}},u_{\mathrm{inc}}^\sharp)$, the no-jump vector for $k=1$ is
\[
\Phi_1(u)=R_{1,\mathrm{inc}}\,C(u,u_{\mathrm{inc}})+R_{1,\mathrm{inc}}^\sharp\,C(u,u_{\mathrm{inc}}^\sharp)
=R_{1,\mathrm{inc}}\bigl[C(u,u_{\mathrm{inc}})-C(u,u_{\mathrm{inc}}^\sharp)\bigr],
\]
and the total exterior density is
\[
Q(\zeta)=\mathbf{e}_1^\top\Phi_1\!\bigl(u(\zeta)\bigr).
\]
The scattered density and diffraction coefficient are then
\[
Q^{\mathrm{sc}}(\zeta)=Q(\zeta)-\frac12\cot\!\left(\frac{\zeta-\zeta_{\mathrm{inc}}}{2}\right),
\qquad
D(\theta)=e^{-i3\pi/4}\sqrt{\frac{2}{\pi}}\,Q^{\mathrm{sc}}(\theta),
\]
with the pole-subtracted evaluation at $\zeta=\zeta_{\mathrm{inc}}$ understood in the sense of \eqref{eq:Qsc_at_inc}.
Finally, inserting $Q^{\mathrm{sc}}$ into \eqref{eq:sommerfeld_difference} (or \eqref{eq:sommerfeld_single}) yields the full exterior scattered field $\Phi^{\mathrm{sc}}(r,\theta)$, and the total exterior field is $\Phi^{\mathrm{ext}}=\Phi^{\mathrm{inc}}+\Phi^{\mathrm{sc}}$.
All quantities in this example are therefore computable from closed-form special-function expressions.

\end{document}